\newtheorem{cor}{Corollary}
\newtheorem{defin}{Definition}
\newtheorem{prop}{Proposition}
\newtheorem{lemma}{Lemma}
\newtheorem{remark}{Remark}
\newtheorem{algorithm}{Algorithm}
\newtheorem{assum}{Assumption}
\def\var{\mathop{\rm var}}
\DeclareMathOperator{\E}{E}
\def\Bern{\mathop{\rm Bern}}
\DeclareMathOperator*{\argmax}{arg\,max}
\title{Channels That Die}%
\author{Lav R. Varshney,  
   	  Sanjoy K.~Mitter,       
        and Vivek K Goyal%
\thanks{This work was supported in part by the NSF Grants CCR-0325774 and CCF-0729069.}%
\thanks{This work appeared in part in the Proceedings of the Forty-Seventh Annual Allerton Conference 
	  on Communication, Control, and Computing \cite{VarshneyMG2009}.}
\thanks{L.~R.~Varshney was with the Department of Electrical Engineering and Computer Science, 
        the Research Laboratory of Electronics, and the Laboratory for Information and Decision Systems,
	  Massachusetts Institute of Technology, Cambridge, MA 02139 USA.  He is now with the
	  IBM Thomas J.\ Watson Research Center, Hawthorne, NY 10532 USA (e-mail: varshney@alum.mit.edu).}%
\thanks{S.~K.\ Mitter is with the Department of Electrical Engineering and Computer Science, the Engineering Systems Division,
	  and the Laboratory for Information and Decision Systems,
	  Massachusetts Institute of Technology, Cambridge, MA 02139 USA (e-mail: mitter@mit.edu).}%
\thanks{V.~K.\ Goyal is with the Department of Electrical Engineering and Computer Science and
        the Research Laboratory of Electronics,
        Massachusetts Institute of Technology,
        Cambridge, MA 02139 USA (e-mail: vgoyal@mit.edu).}%
}
\begin{document}
\maketitle

\begin{abstract}
Given the possibility of communication systems failing
catastrophically, we investigate limits to communicating 
over channels that fail at random times.  These channels are
finite-state semi-Markov channels.  We show that communication with
arbitrarily small probability of error is not possible.  
Making use of results in finite blocklength channel coding,
we determine sequences of blocklengths that optimize transmission
volume communicated at fixed maximum message error probabilities.
We provide a partial ordering of communication channels.
A dynamic programming formulation is used to show the structural result
that channel state feedback does not improve performance.
\end{abstract}

\vspace{2mm}
{\small ``a communication channel{\ldots} might be inoperative because of an amplifier failure, 
a broken or cut telephone wire, \ldots''} \\
\hspace*{\fill} {\small\emph{--- I.~M.~Jacobs \cite{Jacobs1959}}} \\

\section{Introduction}

Physical systems have a tendency to fail at random times \cite{Davis1952}.
This is true whether considering communication systems embedded in sensor networks that may run 
out of energy \cite{DietrichD2009}, synthetic communication systems embedded in
biological cells that may die \cite{CantonLE2008},\footnote{We sidestep teleological 
discussions of natural biology \cite{Pfeifer2006,Varshney2006} by considering 
synthetic biology \cite{Endy2005}.}
communication systems embedded in spacecraft that may enter black holes \cite{Bekenstein2001},
or communication systems embedded in oceans with undersea cables that may be cut \cite{Headrick1991}.
In these scenarios and beyond, failure of the communication system may be modeled as 
communication channel death.

As such, it is of interest to study information-theoretic limits on communicating
over channels that die at random times.  This paper gives results on the fundamental 
limits of what is possible and what is impossible when communicating over channels that die.  
Communication with arbitrarily small probability of error (\emph{Shannon reliability}) is not possible 
for any positive communication volume, however a suitably defined notion of $\eta$-reliability is possible.
Schemes that optimize communication volume for a given level of $\eta$-reliability are developed herein.

The central trade-off in communicating over channels that die is in the lengths
of codeword blocks.  Longer blocks improve communication performance as 
classically known, whereas shorter blocks have a smaller probability of
being prematurely terminated due to channel death.  In several settings, a simple greedy
algorithm for determining the sequence of blocklengths yields a certifiably optimal solution.  
We also develop a dynamic
programming formulation to optimize the ordered integer partition that determines
the sequence of blocklengths.  Besides algorithmic utility, solving the dynamic program demonstrates
the structural result that channel state feedback does not improve performance.

The optimization of codeword blocklengths is reminiscent of 
frame size control in wireless networks \cite{HaraOAOM1996,Modiano1999,LettieriS1998,CiSN2005}, 
however such techniques are used in conjunction with automatic repeat request protocols
and are motivated by amortizing protocol information.
Moreover, the results demonstrate the benefit of adapting to either channel state or 
decision feedback.  Contrarily, we show that adaptation to channel state provides 
no benefit for channels that die.

Limits on channel coding with finite blocklength \cite{Slepian1963,ChangHM1962,MacMullenC1998,Laneman2006,PolyanskiyPV2008,BuckinghamV2008,WiechmanS2008}
are central to our development.  
Indeed, channels that die bring the notion of finite blocklength to the fore and provide
a concrete physical reason to step back from infinity.\footnote{The phrase ``back from infinity'' is borrowed
from J.\ Ziv's 1997 Shannon Lecture.}  Notions of outage in wireless communication \cite{OzarowSW1994,Goldsmith2005} 
and lost letters in postal channels \cite{WolfWZ1970}
are similar to channel death, except that neither outage nor lost letters are permanent conditions.  
Therefore blocklength asymptotics are useful to study those channel models but are not
useful for channels that die.  Recent work that has similar motivations as this paper 
provides the outage capacity of a wireless channel \cite{ZengZC2008}.

The remainder of the paper is organized as follows.  Section~\ref{sec:model}
defines discrete memoryless channels that die and shows that these channels have
zero Shannon capacity.  Section~\ref{sec:systemmodel} states the communication
system model and also fixes our novel performance criteria.  Section~\ref{sec:limcomm}
shows that our notion of Shannon reliability is not achievable, strengthening the result of zero Shannon capacity
and then provides a communication scheme and determines its performance.  Section~\ref{sec:optim}
optimizes performance for several death distributions using either a greedy algorithm or
a dynamic programming algorithm.  Optimization demonstrates
that channel state feedback does not improve performance.  
Section~\ref{sec:ordering} discusses the partial ordering of channels.
Section~\ref{sec:conc} suggests several extensions to this work.

\section{Channel Model}
\label{sec:model}

Consider a channel with finite input alphabet $\mathcal{X}$
and finite output alphabet $\mathcal{Y}$.  It has an \emph{alive} state $s = a$ when it acts like a 
noisy discrete memoryless channel (DMC) and a \emph{dead} state $s = d$ when it erases the input.\footnote{Our results can be extended to
cover cases where the channel acts like other channels \cite{PolyanskiyPV2009,PolyanskiyPV2009b} 
in the alive state.} 
Assume throughout the paper that the DMC from the alive state has zero error capacity \cite{Shannon1956}
equal to zero.\footnote{If the channel is noiseless in the alive state, the problem is 
similar to settings where fountain codes \cite{Sanghavi2007} are used in the point-to-point case and 
growth codes \cite{KamraMFR2006} are used in the network case.}

For example, if the channel acts like a binary symmetric channel (BSC) with crossover
probability $0 < \varepsilon < 1$ in the alive state, with $\mathcal{X} = \{0,1\}$, and 
$\mathcal{Y} = \{0,1,?\}$, then the transmission matrix in the alive state is
\begin{equation}
\label{eq:pa}
p(y|x,s=a) = p_a(y|x) = \begin{bmatrix} 1 - \varepsilon & \varepsilon & 0 \\
				\varepsilon & 1 - \varepsilon & 0 \end{bmatrix} \mbox{,}
\end{equation}
and the transmission matrix in the dead state is
\begin{equation}
\label{eq:pd}
p(y|x,s=d) = p_d(y|x) = \begin{bmatrix} 0 & 0 & 1 \\
				0 & 0 & 1 \end{bmatrix} \mbox{.}
\end{equation}

The channel starts in state $s=a$ and then transitions to $s=d$ at some random time $T$,
where it remains for all time thereafter.  That is, the channel is in state $a$ 
for times $n = 1,2,\ldots,T$ and in state $d$ for times $n = T+1,T+2,\ldots$. 
The death time distribution is denoted $p_T(t)$.  Note that there is always
a finite $t^{\dagger}$ such that $p_T(t^{\dagger}) > 0$.

\subsection{Finite-State Semi-Markov Channel}
Channels that die can be classified as finite-state channels (FSCs) \cite[Sec.~4.6]{Gallager1968}.
\begin{prop}
\label{prop:FSC}
A channel that dies $(\mathcal{X},p_a(y|x),p_d(y|x),p_T(t),\mathcal{Y})$
is a finite-state channel.
\end{prop}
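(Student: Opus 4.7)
The plan is to verify each ingredient of the finite-state channel template of \cite[Sec.~4.6]{Gallager1968} directly against the specification $(\mathcal{X},p_a(y|x),p_d(y|x),p_T(t),\mathcal{Y})$. I would take the state space to be $\mathcal{S} = \{a,d\}$ (finite, with two elements), with deterministic initial state $s_1 = a$, and declare the output kernel conditional on state to be $p(y_n \mid x_n, s_n = a) = p_a(y_n \mid x_n)$ and $p(y_n \mid x_n, s_n = d) = p_d(y_n \mid x_n)$, as exhibited in (\ref{eq:pa})--(\ref{eq:pd}) for the BSC example. The deterministic identification $s_n = a$ for $n \le T$ and $s_n = d$ for $n > T$ then couples the state process to the single exogenous random variable $T \sim p_T$.

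The first thing to check is the Markov-type property of the outputs: conditioned on $(x_n, s_n)$, the output $y_n$ is independent of all prior inputs, outputs, and states. This follows directly because $p_a$ and $p_d$ are memoryless DMCs and the channel is assumed to be in exactly one of the two modes at each time index, so that $p(y_n \mid x^n, y^{n-1}, s^n) = p(y_n \mid x_n, s_n)$. The second check is closure of the state-transition law: $p(s_n = d \mid s_{n-1} = d) = 1$ (absorption), while $p(s_n = a \mid s_{n-1} = a) = \Pr[T \ge n \mid T \ge n-1]$ is the discrete hazard of $p_T$; together these furnish a well-defined joint one-step kernel $p(y_n, s_n \mid x_n, s_{n-1})$ that takes values in the finite set $\mathcal{S}$, and which is functionally independent of the transmitted codeword.

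The one genuine subtlety, and thus where I would spend the most care, is that Gallager's canonical FSC is \emph{Markov} with \emph{time-invariant} one-step transition probabilities, whereas a general death distribution $p_T$ induces time-varying hazards between $a$ and $d$ (only the geometric distribution yields a strict Markov chain). I would handle this in one of two ways: either (i)~note that the section heading already advertises the model as a finite-state \emph{semi-Markov} channel, and observe that the finite-state channel framework really only requires a finite $\mathcal{S}$ and a causal, well-defined per-symbol kernel, which the construction above provides; or (ii)~argue equivalence with a larger-state Markov description in which the state carries the sojourn count $(s,\tau)$, from which the two-state reduction is then a quotient. Route (i) is the cleaner argument and is, I expect, the intended one, making the proposition essentially a definition check rather than a substantive claim, and so the proof should be quite short.
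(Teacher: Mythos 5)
Your route (i) is exactly the paper's argument: the entire proof given there is ``Follows by definition, since the channel has two states,'' i.e.\ a bare definition check with $\mathcal{S}=\{a,d\}$ and per-state kernels $p_a$, $p_d$, so your proposal is correct and matches the intended approach, just spelled out in much more detail. The one substantive thing you add --- the observation that Gallager's FSC kernel $P(y_n,s_n\mid x_n,s_{n-1})$ is time-invariant while a general $p_T$ induces a time-varying hazard from $a$ to $d$ --- is a legitimate subtlety that the paper silently elides; the paper effectively concedes the point later by distinguishing FSSMCs from FSMCs and noting the state process is Markov only when $T$ is geometric. Be aware, though, that your fallback route (ii) does not actually work in general: augmenting the state with the sojourn count $(s,\tau)$ yields an \emph{infinite} state space whenever $T$ has unbounded support, so it cannot be used to rescue the finite-state property; stick with route (i).
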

\begin{IEEEproof}
Follows by definition, since the channel has two states.
\end{IEEEproof}

Channels that die have semi-Markovian \cite[Sec.~4.8]{Ross1996}, \cite[Sec.~5.7]{Gallager1996} properties.
\begin{defin}
A semi-Markov process changes state according to a Markov chain
but takes a random amount of time between changes.  More specifically,
it is a stochastic process with states from a discrete alphabet $\mathcal{S}$, such that
whenever it enters state $s$, $s \in \mathcal{S}$:
\begin{itemize}
  \item The next state it will enter is state $r$ with probability that depends only on $s,r \in \mathcal{S}$.
  \item Given that the next state to be entered is state $r$, the time until the transition from $s$ to $r$ 
	occurs has distribution that depends only on $s,r \in \mathcal{S}$.
\end{itemize}
\end{defin}
\begin{defin}
The Markovian sequence of states of a semi-Markov process is 
called the embedded Markov chain of the semi-Markov process.
\end{defin}
\begin{defin}
A semi-Markov process is irreducible if its embedded Markov chain is irreducible.
\end{defin}
\begin{prop}
\label{prop:sM}
A channel that dies $(\mathcal{X},p_a(y|x),p_d(y|x),p_T(t),\mathcal{Y})$ has a channel state sequence that 
is a non-irreducible semi-Markov process.
\end{prop}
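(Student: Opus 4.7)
The plan is to verify the two bulleted conditions of the semi-Markov definition for the two-state channel and then observe that the embedded Markov chain fails irreducibility. First I would fix the state space $\mathcal{S}=\{a,d\}$ and describe the embedded Markov chain: starting in $a$, the chain deterministically transitions to $d$ (probability $1$) and thereafter remains in $d$ (so $d$ is absorbing, with self-transition probability $1$). This matches the problem setup that the channel starts in $s=a$, transitions once at time $T$, and stays in $s=d$ for all subsequent time.

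Next I would verify the two semi-Markov conditions on a per-state basis. For state $a$: the next state is $d$ with probability $1$, which depends only on the current state $a$; and the holding time before the transition is exactly $T$, distributed according to $p_T(t)$, again depending only on $(s,r)=(a,d)$. For state $d$, treated as absorbing, the ``next state'' is $d$ itself with probability $1$, and the holding time is deterministically $\infty$, which is a degenerate distribution depending only on $d$. (If one prefers to exclude self-loops from the semi-Markov framework, the process is trivially semi-Markov on $\{a\}$ after truncation at $T$ and absorbing state handling is standard, as in \cite[Sec.~4.8]{Ross1996}.) In either interpretation, both defining properties hold.

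For non-irreducibility I would invoke the definition directly: the embedded Markov chain has transition probabilities $P(a,d)=1$, $P(a,a)=0$, $P(d,d)=1$, $P(d,a)=0$. Thus $d$ is reachable from $a$, but $a$ is not reachable from $d$; the two states do not communicate, so the embedded Markov chain is reducible, and hence the semi-Markov process is non-irreducible by definition.

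The only mild obstacle is notational rather than mathematical: one must decide how to model the absorbing dead state within the semi-Markov formalism (via an infinite holding time, a self-loop, or simply as a terminal state). I would resolve this with a brief sentence acknowledging the convention, after which the proposition follows directly from the definitions and no further calculation is needed.
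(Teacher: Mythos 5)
Your proposal is correct and follows essentially the same route as the paper: verify the two defining conditions of a semi-Markov process state by state ($a\to d$ with probability $1$ and holding time distributed as $p_T$, $d$ absorbing), then observe the embedded chain is reducible. Your extra care with the holding-time convention in the absorbing state and your phrasing of non-irreducibility via non-communication of states (the paper instead notes that $a$ is transient) are equivalent and equally valid.
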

\begin{proof}
When in state $a$, the next state is $d$ with probability $1$ and given that
the next state is to be $d$, the time until the transition from $a$ to $d$ has distribution $p_T(t)$.
When in state $d$, the next state is $d$ with probability $1$.
Thus, the channel state sequence is a semi-Markov process.

The semi-Markov state process is not
irreducible because the $a$ state of the embedded Markov chain is transient.
\end{proof}
\noindent Note that when $T$ is a geometric random variable, the channel state
process forms a Markov chain, with transient state $a$ and recurrent, absorbing state
$d$.

There are further special classes of FSCs.
\begin{defin}
An FSC is a \emph{finite-state semi-Markov channel} (FSSMC) if its
state sequence forms a semi-Markov process.
\end{defin}
\begin{defin}
An FSC is a \emph{finite-state Markov channel} (FSMC) if its
state sequence forms a Markov chain.
\end{defin}
\begin{prop}
A channel that dies $(\mathcal{X},p_a(y|x),p_d(y|x),p_T(t),\mathcal{Y})$
is an FSSMC and is an FSMC when $T$ is geometric.
\end{prop}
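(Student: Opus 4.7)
\begin{IEEEproof}[Proof plan]
The plan is to assemble this statement directly from the two preceding propositions and then handle the geometric case via the memoryless property. First I would invoke Proposition~\ref{prop:FSC} to conclude that a channel that dies is an FSC, and Proposition~\ref{prop:sM} to conclude that its state sequence is a semi-Markov process. By the definition of an FSSMC, the conjunction of these two facts immediately yields the first half of the claim.

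For the second half, the goal is to show that when $T$ is geometric the state sequence is actually a Markov chain. The natural approach is to check the one-step transition structure: from the dead state $d$ the next state is deterministically $d$, so the only nontrivial transition is the one out of $a$. Writing out $\Pr(s_{n+1} = a \mid s_n = a, s_{n-1}, \ldots, s_1)$, I would use the fact that if $T$ is geometric with parameter $q$, then $\Pr(T \geq n+1 \mid T \geq n) = 1 - q$ independently of $n$, so the probability of remaining in state $a$ at the next time step depends only on the current state and not on the history. This is exactly the Markov property, and since the state alphabet is finite, the process is an FSMC.

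The proof is essentially bookkeeping once the ingredients are in place, and there is no genuine obstacle; the only subtle point is to make explicit that the memoryless property of the geometric distribution is precisely what upgrades the semi-Markov structure (where sojourn times may have arbitrary distributions) to a Markov structure (where sojourn times must be geometric). I would state this upgrade explicitly so that the reader sees why geometric $T$ is both sufficient and, in a natural sense, the only distribution that yields an FSMC in this setting.
\end{IEEEproof}
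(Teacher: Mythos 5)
Your proposal is correct and follows essentially the same route as the paper: the paper's proof is the one-line ``Follows from Props.~\ref{prop:FSC} and \ref{prop:sM},'' with the geometric case handled by the remark following Prop.~\ref{prop:sM} that memorylessness makes the state process a Markov chain with transient state $a$ and absorbing state $d$. Your explicit verification of the one-step Markov property via $\Pr(T \geq n+1 \mid T \geq n) = 1-q$ is just a fleshed-out version of that same observation.
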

\begin{IEEEproof}
Follows from Props.~\ref{prop:FSC} and \ref{prop:sM}.
\end{IEEEproof}

FSMCs have been widely studied in the literature \cite{Gallager1968,Gallager1972,TatikondaM2009}, particularly the
panic button/child's toy channel of Gallager \cite[p.~26]{Gallager1972}, \cite[p.~103]{Gallager1968} and the 
Gilbert-Elliott channel and its extensions \cite{MushkinB1989,GoldsmithV1996}.

Contrarily, FSSMCs seem to not have
been specifically studied in information theory. There are a few works \cite{BratenT2002,WangCA2008,WangP2010} that
give semi-Markov channel models for wireless communications systems but do not
provide information-theoretic characterizations.

\subsection{Capacity is Zero}
A channel that dies has Shannon capacity equal to zero.  To show this, first
notice that if the initial state of a channel that dies were not fixed, then it would
be an indecomposable FSC \cite[Sec.~4.6]{Gallager1968}, where the effect of the initial state dies
away.

\begin{prop}
\label{prop:indec}
If the initial state of a channel that dies $(\mathcal{X},p_a(y|x),p_d(y|x),p_T(t),\mathcal{Y})$
is not fixed, then it is an indecomposable FSC.
\end{prop}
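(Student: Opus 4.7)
My plan is to verify Gallager's definition of indecomposability directly: for every $\epsilon > 0$ there exists $n_0$ such that for all $n \geq n_0$, every state $s$, every pair of initial states $s_0, s_0'$, and every input sequence $x_1^n$,
\[
\bigl|\Pr(s_n = s \mid x_1^n, s_0) - \Pr(s_n = s \mid x_1^n, s_0')\bigr| \leq \epsilon.
\]
The key observation is that the state evolution of a channel that dies is entirely autonomous: it does not depend on the input sequence $x_1^n$ at all, since transitions from $a$ to $d$ are governed only by the random time $T$, and $d$ is absorbing. So the input $x_1^n$ drops out of the conditioning and the problem reduces to a purely probabilistic statement about the state chain.

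First I would note that $\Pr(s_n = d \mid s_0 = d) = 1$ for all $n \geq 0$, since $d$ is absorbing. Next, starting from $s_0 = a$, I have $\Pr(s_n = d \mid s_0 = a) = \Pr(T \leq n)$. Because $p_T$ is a probability distribution on the positive integers (with at least one point $t^\dagger$ of positive mass guaranteeing it is nondegenerate), $\Pr(T \leq n) \to 1$ as $n \to \infty$. Therefore, given any $\epsilon > 0$, choose $n_0$ large enough that $\Pr(T \leq n_0) \geq 1 - \epsilon$.

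For any $n \geq n_0$, the distribution of $s_n$ conditioned on $s_0 = a$ places mass at least $1 - \epsilon$ on $d$, and the distribution conditioned on $s_0 = d$ places mass $1$ on $d$. Hence the two conditional distributions over $\{a,d\}$ differ in total variation by at most $\epsilon$, which yields the desired pointwise bound. Since the argument is uniform in $x_1^n$ (the state chain being input-independent), this establishes indecomposability in the sense of \cite[Sec.~4.6]{Gallager1968}.

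There is essentially no obstacle here; the only subtlety is recognizing that indecomposability is a statement about forgetting the initial state, whereas we have the stronger property that both initial states are forgotten by being driven into the same absorbing state $d$. The hypothesis that the initial state is not fixed is needed only to place the channel in the class of FSCs for which indecomposability is defined (Gallager's framework assumes an initial state distribution rather than a deterministic start), after which the absorbing nature of $d$ together with $\Pr(T < \infty) = 1$ does all the work.
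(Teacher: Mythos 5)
Your proof is correct and is essentially the paper's argument in expanded form: the paper's proof is the one-line observation that the embedded chain has a unique absorbing state $d$, and you have simply verified Gallager's definition of indecomposability directly from that fact (input-independence of the state evolution plus $\Pr(T\le n)\to 1$ forcing both initial conditions into $d$). Your explicit identification of $\Pr(T<\infty)=1$ as the needed hypothesis is a point the paper leaves implicit, but the underlying idea is the same.
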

\begin{IEEEproof}
The embedded Markov chain for a channel that dies has a unique absorbing state
$d$.
\end{IEEEproof}

Indecomposable FSCs have the property that the upper capacity, defined in \cite[(4.6.6)]{Gallager1968},
and lower capacity, defined in \cite[(4.6.3)]{Gallager1968},
are identical \cite[Thm.~4.6.4]{Gallager1968}.  This can be used to show
that the capacity of a channel that dies is zero.
\begin{prop}
\label{prop:cap_zero}
The Shannon capacity, $C$, of a channel that dies $(\mathcal{X},p_a(y|x),p_d(y|x),p_T(t),\mathcal{Y})$
is zero.
\end{prop}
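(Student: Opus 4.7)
My plan is to leverage the indecomposability established in Proposition \ref{prop:indec} together with Gallager's equivalence of upper and lower capacities for indecomposable FSCs, and then simply evaluate the capacity when the channel is started already dead.

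First, I would recall Gallager's definitions: for an FSC, let $C_n(s_0) = \max_{Q} \tfrac{1}{n} I(X^n; Y^n \mid s_0)$ denote the per-letter mutual information from a fixed initial state $s_0$. Gallager's \cite[Thm.~4.6.4]{Gallager1968} asserts that for any indecomposable FSC, the limit $\lim_{n \to \infty} C_n(s_0)$ exists and is the same for every $s_0 \in \mathcal{S}$; this common value is the capacity $C$. By Proposition~\ref{prop:indec}, a channel that dies is indecomposable when the initial state is not fixed, so this hypothesis is met and it suffices to compute the limit for any one convenient choice of $s_0$.

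Next, I would evaluate $C_n(d)$. When $s_0 = d$, the semi-Markov state process stays in $d$ forever, so every transmission uses the kernel $p_d(y \mid x)$, which maps every $x$ to the erasure symbol deterministically (as in \eqref{eq:pd} for the BSC example; the assumption that the dead state erases the input makes this general). Thus $Y^n$ is constant almost surely regardless of $X^n$, giving $I(X^n; Y^n \mid s_0 = d) = 0$ and hence $C_n(d) = 0$ for every $n$. Therefore $\lim_{n \to \infty} C_n(d) = 0$.

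Finally, by the equality of limits across initial states provided by Gallager's theorem, $C = \lim_{n \to \infty} C_n(a) = \lim_{n \to \infty} C_n(d) = 0$. Since the natural setting of a channel that dies takes $s_0 = a$, this gives $C = 0$ as claimed. I do not anticipate a real obstacle: the only subtlety is that $C_n(a)$ is strictly positive for small $n$ (before the channel is likely to have died), so one cannot argue pointwise but must invoke the indecomposability-based limit theorem to pass the zero from the dead initial condition over to the alive one.
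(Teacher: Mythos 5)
Your proposal is correct and follows essentially the same route as the paper: invoke Proposition~\ref{prop:indec} to get indecomposability when the initial state is unconstrained, apply Gallager's \cite[Thm.~4.6.4]{Gallager1968} to equate the capacities from different initial states (equivalently, the lower and upper capacities), and observe that the dead-state channel carries zero information, forcing $C = 0$ for the alive initial state as well. Your closing remark about why one cannot argue pointwise from $s_0 = a$ is a nice clarification, but the argument itself matches the paper's.
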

\begin{IEEEproof}
Although the initial state is $s_1=a$ here, temporarily suppose 
that $s_1$ may be either $a$ or $d$.  Then the channel is 
indecomposable by Prop.~\ref{prop:indec}.

The lower capacity $\underline{C}$ equals the upper capacity $\overline{C}$,  
for indecomposable channels by \cite[Thm.~4.6.4]{Gallager1968}. 
The information rate of a memoryless $p_d(y|x)$ `dead' channel is clearly zero
for any input distribution, so the lower capacity $\underline{C} = 0$.  
Thus the Shannon capacity for a channel that dies with initial alive state is $C = \overline{C} = 0$.
\end{IEEEproof}

\section{Communication System}
\label{sec:systemmodel}
In order to information theoretically characterize a channel that dies, a communication system
that contains the channel is described.

We have an information stream (like i.i.d.\ equiprobable bits), which can be grouped into a sequence of 
$k$ messages, $(W_1,W_2,\ldots,W_k)$.  
Each message $W_i$ is drawn from a message set $\mathcal{W}_i = \{1,2,\ldots,M_i\}$.  
Each message $W_i$ is encoded into a channel input codeword $X_1^{n_i}(W_i)$ and these codewords
$(X_1^{n_1}(W_1),X_1^{n_2}(W_2),\ldots,X_1^{n_k}(W_k))$ are transmitted in sequence over the channel.  A noisy version
of this codeword sequence is received, $Y_1^{n_1+n_2+\cdots+n_k}(W_1,W_2,\ldots,W_k)$.  The receiver then 
guesses the sequence of messages using an appropriate decoding rule $g$,
to produce $(\hat{W}_1,\hat{W}_2,\ldots,\hat{W}_k) = g(Y_1^{n_1+n_2+\cdots+n_k})$.  
The $\hat{W}_i$s are drawn from alphabets $\mathcal{W}_i^{\ominus} = \mathcal{W}_i \cup \ominus$,
where the $\ominus$ message indicates the decoder declaring an erasure.  The receiver
makes an error on message $i$ if $\hat{W}_i \neq W_i$ and $\hat{W}_i \neq \ominus$.

Block coding results are typically expressed with the concern of
sending one message rather than $k$ messages as here.\footnote{Tree 
codes are beyond the scope of this paper, since we desire to communicate messages.  A reformulation of
communicating over channels that die using tree codes \cite[Ch.~10]{Jelinek1968b} with early termination \cite{Forney1974a}
would, however, be interesting.  In fact, communicating over channels that die using convolutional
codes with sequential decoding would be very natural, but would require performance criteria different from the ones 
developed herein.}

System definitions can be formalized as follows.
\begin{defin}
An $(M_i,n_i)$ \emph{individual message code} for a channel that dies $(\mathcal{X},p_a(y|x),p_d(y|x),p_T(t),\mathcal{Y})$ 
consists of:
\begin{enumerate}
  \item An individual message index set $\{1,2,\ldots,M_i\}$, and
  \item An individual message encoding function $f_{i}: \{1,2,\ldots,M_i\} \mapsto \mathcal{X}^{n_i}$.
\end{enumerate}
The individual message index set $\{1,2,\ldots,M_i\}$ is denoted $\mathcal{W}_i$, and the 
set of individual message codewords $\{f_{i}(1),f_{i}(2),\ldots,f_{i}(M_i)\}$ is called the 
\emph{individual message codebook}.
\end{defin}

\begin{defin}
An $(M_i,n_i)_{i=1}^k$ \emph{code} for a channel that dies $(\mathcal{X},p_a(y|x),p_d(y|x),p_T(t),\mathcal{Y})$ 
is a sequence of $k$ individual message codes, $(M_i,n_i)_{i=1}^k$, in the sense of comprising:
\begin{enumerate}
  \item A sequence of individual message index sets $\mathcal{W}_1, \mathcal{W}_2, \ldots, \mathcal{W}_k$,
  \item A sequence of individual message encoding functions $f = (f_1,f_2,\ldots,f_k)$, and 
  \item A decoding function $g: \mathcal{Y}^{\sum_{i=1}^k n_i} \mapsto \mathcal{W}_1^{\ominus} \times \mathcal{W}_2^{\ominus} \times \cdots \times \mathcal{W}_k^{\ominus}$.
\end{enumerate}
\end{defin}

There is no essential loss of generality
by assuming that the decoding function $g$ is decomposed into a sequence of
individual message decoding functions $g = (g_1,g_2,\ldots,g_n)$ where $g_{i}: \mathcal{Y}^{n_i} \mapsto \mathcal{W}_i^{\ominus}$
when individual messages are chosen independently,
due to this independence and the conditional memorylessness of the channel.

To define performance measures, we assume that the decoder operates
on an individual message basis.  That is, when applying the communication system, 
let $\hat{W}_1 = g_1(Y_1^{n_1})$, $\hat{W}_2 = g_2(Y_{n_1 + 1}^{n_1 + n_2})$, and so on.

For the sequel, we make a further assumption on the operation of the decoder. 
\begin{assum}
\label{assum:erase}
If all $n_i$ channel output symbols used by individual message decoder $g_i$
are not $?$, then the range of $g_i$ is $\mathcal{W}_i$.  If any of
the $n_i$ channel output symbols used by individual message decoder $g_i$
are $?$, then $g_i$ maps to $\ominus$.
\end{assum}
This assumption corresponds to the physical properties of a communication system
where the decoder fails catastrophically.  Once the decoder fails, it cannot perform
any decoding operations, and so the $?$ symbols in the channel model of system failure
must be ignored.

\subsection{Performance Measures}

We formally write the notion of error for the communication system as follows.
\begin{defin}
For all $1 \le w \le M_i$, let
\[
\lambda_{w}(i) = \Pr[\hat{W}_i \neq w|W_i = w, \hat{W}_i \neq \ominus]
\]
be the conditional message probability of error given that the $i$th individual message is $w$.
\end{defin}
\begin{defin}
The maximal probability of error for an $(M_i,n_i)$ individual message code
is
\[
\lambda_{\max}(i) = \max_{w \in \mathcal{W}_i} \lambda_{w}(i) \mbox{.}
\]
\end{defin}
\begin{defin}
The maximal probability of error for an $(M_i,n_i)_{i=1}^k$ code is
\[
\lambda_{\max} = \max_{i \in \{1,\ldots,k\}} \lambda_{\max}(i)  \mbox{.}
\]
\end{defin}

Performance criteria weaker than traditional in information theory are defined,
since the Shannon capacity of a channel that dies is zero (Prop.~\ref{prop:cap_zero}).
In particular, we define formal notions of how 
much information is transmitted using a code and how long it takes.
\begin{defin}
The transmission time of an $(M_i,n_i)_{i=1}^k$ code is
$N = \sum_{i=1}^k n_i$.
\end{defin}

\begin{defin}
The expected transmission volume of an $(M_i,n_i)_{i=1}^k$ code is
\[
V = \E_{T} \left\{\sum_{i \in \{ 1,\ldots,k | \hat{W}_i \neq \ominus \}} \log M_i \right\}\mbox{.}
\]
\end{defin}
Notice that although declared erasures do not lead to errors, they do not contribute
transmission volume either.

The several performance criteria for a code may be combined together.
\begin{defin}
Given $0\le \eta < 1$, a pair of numbers $(N_0,V_0)$ (where $N_0$ is a positive integer and $V_0$ is non-negative)
is said to be an \emph{achievable transmission time-volume at $\eta$-reliability} 
if there exists, for some $k$, an $(M_i,n_i)_{i=1}^k$ code for the channel that dies $(\mathcal{X},p_a(y|x),p_d(y|x),p_T(t),\mathcal{Y})$
such that 
\begin{align}
\lambda_{\max} &\le \eta \mbox{, }\\
 N &\le N_0\mbox{, and }\\ 
V &\ge V_0\mbox{.}
\end{align}

Moreover, $(N_0,V_0)$ is said to be an \emph{achievable transmission time-volume 
at Shannon reliability} if it is an achievable transmission time-volume 
at $\eta$-reliability for all $0 < \eta < 1$.
\end{defin}

\section{Limits on Communication}
\label{sec:limcomm}

Having defined the notion of achievable transmission time-volume at various levels of
reliability, the goal of this work is to demarcate what is achievable.

\subsection{Shannon Reliability is Not Achievable}
\label{sec:noShannonRel}

Not only is the Shannon capacity of a channel that dies zero, 
but also there is no $V > 0$ such that $(N,V)$ is an achievable transmission time-volume 
at Shannon reliability.  A coding scheme that always declares erasures would achieve
zero error probability (and therefore Shannon reliability) but would not provide 
positive transmission volume; this is also not allowed under Assumption~\ref{assum:erase}. 

Lemmas are stated and proved after the proof of the main proposition. For
brevity, the proof is limited to the alive-BSC case, but can be extended to general
alive-DMCs by choosing the two most distant letters in $\mathcal{Y}$ for constructing the repetition
code, among other things.
\begin{prop}
\label{prop:noShaRel}
For a channel that dies $(\mathcal{X},p_a(y|x),p_d(y|x),p_T(t),\mathcal{Y})$, 
there is no $V > 0$ such that $(N,V)$ is an achievable transmission time-volume 
at Shannon reliability.
\end{prop}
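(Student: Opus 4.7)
The plan is a proof by contradiction. Suppose $(N_0,V_0)$ with $V_0>0$ is achievable at Shannon reliability. Then for every $\eta\in(0,1)$ some code $(M_i,n_i)_{i=1}^k$ satisfies $N=\sum_i n_i\le N_0$, $V\ge V_0$, and $\lambda_{\max}\le\eta$ simultaneously. I would exhibit a fixed constant $c>0$ depending only on $\varepsilon$ and $N_0$ such that every such candidate code is forced to have $\lambda_{\max}\ge c$; taking $\eta<c$ then produces the contradiction.

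First I would locate an ``informative'' block. By Assumption~\ref{assum:erase}, the event $\{\hat W_i\neq\ominus\}$ is exactly the event that the channel is alive for every symbol of block $i$, so it is a function of the death time $T$ alone and
\[
V\;=\;\sum_{i=1}^{k}(\log M_i)\,\Pr\!\left[T\ge n_1+\cdots+n_i\right].
\]
Hence $V\ge V_0>0$ forces some index $i^\star$ with $M_{i^\star}\ge 2$ and with $\Pr[T\ge n_1+\cdots+n_{i^\star}]>0$. Because $N\le N_0$, this informative block satisfies $n_{i^\star}\le N_0$. Conditioning the analysis of $\lambda_{\max}(i^\star)$ on the positive-probability event $\{\hat W_{i^\star}\neq\ominus\}$ reduces the problem to decoding over an ordinary BSC($\varepsilon$) with blocklength at most $N_0$ and at least two codewords.

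The main step, and the one I expect to be the real obstacle, is a finite-blocklength converse: every BSC($\varepsilon$) code with $M\ge 2$ codewords and blocklength $n$ has $\lambda_{\max}\ge \min(\varepsilon,1-\varepsilon)^n$. I would prove this by picking two distinct codewords $c_w\neq c_{w'}$ and case-splitting on the decoder's action at $c_{w'}$. If the decoder does not map $c_{w'}$ to $w'$, then sending $c_{w'}$ and receiving it noiselessly (probability $(1-\varepsilon)^n$) is already an error for $w'$; if the decoder maps $c_{w'}$ to $w'$, then sending $c_w$ and receiving $c_{w'}$ has probability $\varepsilon^{d}(1-\varepsilon)^{n-d}\ge\min(\varepsilon,1-\varepsilon)^n$ and is an error for $w$. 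Either way $\max(\lambda_w,\lambda_{w'})\ge\min(\varepsilon,1-\varepsilon)^n$. Since $n_{i^\star}\le N_0$, this yields $\lambda_{\max}(i^\star)\ge \min(\varepsilon,1-\varepsilon)^{N_0}>0$. The author's parenthetical about ``the two most distant letters in $\mathcal{Y}$'' confirms that this pairwise-confusion argument is the intended route, generalized beyond binary by selecting a well-separated letter pair and treating the resulting induced binary-input channel.

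Combining the pieces, every admissible code has $\lambda_{\max}\ge \min(\varepsilon,1-\varepsilon)^{N_0}=:c$, so taking $\eta<c$ rules out every candidate and contradicts the hypothesized achievability. The definition unwinding, the reduction to a BSC subproblem via Assumption~\ref{assum:erase}, and the final choice of $\eta$ are all routine; the finite-blocklength pairwise-confusion bound is the only delicate ingredient, and isolating it as a lemma (which the author signals will appear after the proof) is the natural way to organize the argument.
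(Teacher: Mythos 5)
Your proof is correct, but it takes a genuinely different route from the paper's. The paper first reduces, via three monotonicity lemmas (longer is not harder, fewer messages is not harder, repetition is optimal for $M=2$), to the single infinite-length binary repetition code, and then observes that because $p_T(t^{\dagger})>0$ for some finite $t^{\dagger}$, with positive probability only finitely many informative BSC observations are ever available, so the hypothesis-testing error is bounded away from zero no matter how large $n_1$ is. You instead run a contradiction argument that never reduces to the repetition code: you extract an informative block $i^\star$ from $V_0>0$, note that the definition forces $n_{i^\star}\le N_0<\infty$, and apply an explicit pairwise-confusion converse $\lambda_{\max}\ge\min(\varepsilon,1-\varepsilon)^{N_0}$. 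Your version is more self-contained and quantitative --- it supplies the constant $K$ that the paper merely asserts, and it avoids the paper's somewhat hand-waved lemmas (e.g.\ ``follows from sphere-packing principles''). What it gives up is the conceptual content: your bound uses only the finiteness of $N_0$ baked into the definition and would apply verbatim to a channel that never dies, whereas the paper's argument is engineered to survive $N\to\infty$ and thereby isolates channel death (the finite $t^{\dagger}$ with $p_T(t^{\dagger})>0$) as the true obstruction to Shannon reliability. If you wanted your argument to carry that stronger content, you would condition on the positive-probability event $\{T=t^{\dagger}\}$ and apply your pairwise-confusion bound with $t^{\dagger}$ in place of $N_0$; as the proposition is literally stated, however, your proof is complete.
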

\begin{IEEEproof}
From the error probability viewpoint, transmitting longer codes
is not harder than transmitting shorter codes (Lem.~\ref{lem:longerbetter}) and
transmitting smaller codes is not harder than transmitting larger codes (Lem.~\ref{lem:smallerbetter}).
Hence, the desired result follows from showing that even the longest and smallest code that has positive expected transmission
volume cannot achieve Shannon reliability.

Clearly the longest and smallest code uses a single individual message code
of length $n_1 \to \infty$ and size $M_1 = 2$.  Among such codes, 
transmitting the binary repetition code is not harder than transmitting
any other code (Lem.~\ref{lem:repetitionbetter}).  Hence showing that the binary
repetition code cannot achieve Shannon reliability yields the desired result.

Consider transmitting a single $(M_1 = 2, n_1)$ individual message code that is simply a 
binary repetition code over a channel that dies $(\mathcal{X},p_a(y|x),p_d(y|x),p_T(t),\mathcal{Y})$.

Let $\mathcal{W}_1 = \{00000\ldots, 11111\ldots \}$, 
where the two codewords are of length $n_1$.  
Assume that the all-zeros codeword and the 
all-ones codeword are each transmitted with probability $1/2$ and measure average probability
of error, since average error probability lower bounds $\lambda_{\max}(1)$ \cite[Problem 5.32]{Gallager1968}.  
The transmission time $N = n_1$ and let $N \to \infty$.  
The expected transmission volume is $\log 2 > 0$.

Under equiprobable signaling over a BSC, the minimum error probability decoder is
the maximum likelihood decoder, which in turn is the minimum distance decoder \cite[Problem 2.13]{McEliece2002}.

The scenario corresponds to binary hypothesis testing over a 
BSC($\varepsilon$) with $T$ observations (since after the channel dies,
the output symbols do not help with hypothesis testing).  Since there is a 
finite $t^{\dagger}$ such that $p_T(t^{\dagger}) > 0$, there is a fixed constant $K$
such that $\lambda_{\max} > K > 0$ for any realization $T = t$.

Thus Shannon reliability is not achievable.
\end{IEEEproof}

\begin{lemma}
\label{lem:longerbetter}
When transmitting over the alive state's memoryless channel $p_a(y|x)$, 
let the maximal probability of error $\lambda_{\max}(i)$ for an optimal $(M_i, n_i)$ individual 
message code and minimum probability of error individual decoder $g_i$ be $\lambda_{\max}(i; n_i)$.  
Then $\lambda_{\max}(i; n_i + 1) \le \lambda_{\max}(i; n_i)$.
\end{lemma}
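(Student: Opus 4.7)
The plan is to prove this monotonicity result by a straightforward code-extension argument: any length-$n_i$ code can be simulated by a length-$(n_i+1)$ code that ignores one channel use, so the optimum over length $n_i + 1$ can be no worse.

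More concretely, I would start with an optimal $(M_i, n_i)$ individual message code $(f_i^\star, g_i^\star)$ achieving $\lambda_{\max}(i; n_i)$ over the alive-state DMC $p_a(y|x)$. From this I construct a specific (not necessarily optimal) $(M_i, n_i+1)$ code $(\tilde f_i, \tilde g_i)$ as follows: fix an arbitrary symbol $x_0 \in \mathcal{X}$, let $\tilde f_i(w) = (f_i^\star(w), x_0)$ for every message $w \in \mathcal{W}_i$, and let $\tilde g_i(y_1^{n_i+1}) = g_i^\star(y_1^{n_i})$, i.e., discard the last received symbol. Because the alive-state channel is memoryless, the joint distribution of $(W_i, Y_1^{n_i})$ induced by the new code is identical to the one induced by the original, and the last channel output $Y_{n_i+1}$ is irrelevant to the decoding decision. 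Hence for every $w$, the conditional error probability $\lambda_w(i)$ is unchanged, and so is the maximum over $w$.

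This gives a length-$(n_i+1)$ code whose maximal error probability equals $\lambda_{\max}(i; n_i)$. Since $\lambda_{\max}(i; n_i+1)$ is, by definition, the maximal error probability achieved by the \emph{best} $(M_i, n_i+1)$ code with its minimum-error decoder, we obtain $\lambda_{\max}(i; n_i+1) \le \lambda_{\max}(i; n_i)$, as required.

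There isn't really a hard step here — the only thing to be careful about is that Assumption~\ref{assum:erase} is compatible with this construction. Since the proof is stated for the alive-state channel $p_a(y|x)$ alone (no erasure symbols appear), the decoder $\tilde g_i$ always maps into $\mathcal{W}_i$ on these inputs, consistent with the assumption. The only mild subtlety is whether the best length-$(n_i+1)$ code really is at least as good as the padded version; this is immediate from the definition of $\lambda_{\max}(i; n_i+1)$ as a minimum over codes and decoders.
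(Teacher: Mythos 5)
Your proposal is correct and is essentially the same argument the paper gives: pad each codeword of the optimal length-$n_i$ code with a dummy symbol, have the decoder ignore the extra output, and invoke optimality of the length-$(n_i+1)$ code. Your additional remarks on memorylessness and Assumption~\ref{assum:erase} are fine but not a different route.
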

\begin{IEEEproof}
Consider the optimal block-length-$n_i$ individual message code/decoder, which achieves $\lambda_{\max}(i; n_i)$.
Use it to construct an $n_i + 1$ individual message code that appends a dummy symbol
to each codeword and an associated decoder that operates by ignoring this last symbol.
The error performance of this (suboptimal) code/decoder is clearly $\lambda_{\max}(i; n_i)$, and so
the optimal performance can only be better: $\lambda_{\max}(i; n_i + 1) \le \lambda_{\max}(i; n_i)$.
\end{IEEEproof}

\begin{lemma}
\label{lem:smallerbetter}
When transmitting over the alive state's memoryless channel $p_a(y|x)$, 
let the maximal probability of error $P_e^{\rm max}(i)$ for an optimal $(M_i, n_i)$ individual 
message code and minimum probability of error individual decoder $f_D^{(i)}$ be $P_e^{\rm max}(i; M_i)$.  
Then $P_e^{\rm max}(i; M_i) \le P_e^{\rm max}(i; M_i + 1)$.
\end{lemma}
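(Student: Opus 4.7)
The plan is a direct construction-by-subcode argument. I will exhibit an $(M_i, n_i)$ code/decoder whose maximum error probability is bounded above by $P_e^{\rm max}(i; M_i + 1)$; since $P_e^{\rm max}(i; M_i)$ is by definition the smallest such quantity over all $(M_i, n_i)$ codes with optimal decoders, this gives the claimed inequality.

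First, fix an optimal $(M_i + 1, n_i)$ individual message code, with codewords $c_1,\ldots,c_{M_i+1}$ and optimal decoder $g^*$. Because we transmit only over the alive memoryless channel $p_a(y|x)$, no $?$ symbol occurs, so Assumption~\ref{assum:erase} forces $g^*$ to map $\mathcal{Y}^{n_i}$ to $\mathcal{W}_i = \{1,\ldots,M_i+1\}$; let $R_1,\ldots,R_{M_i+1}$ be the resulting partition of $\mathcal{Y}^{n_i}$ into decision regions. By hypothesis, $\max_{1\le j\le M_i+1} \Pr[Y_1^{n_i}\notin R_j \mid c_j \text{ sent}] = P_e^{\rm max}(i; M_i + 1)$.

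Next, delete the codeword $c_{M_i+1}$ to form an $(M_i, n_i)$ code with codewords $c_1,\ldots,c_{M_i}$, and define a modified decoder $g'$ that agrees with $g^*$ on $R_1 \cup \cdots \cup R_{M_i}$ but outputs message $1$ on $R_{M_i+1}$. Then $g'$ has decision regions $R_1' = R_1 \cup R_{M_i+1}$ and $R_j' = R_j$ for $2 \le j \le M_i$. For any $j \ge 2$, the conditional error probability given $c_j$ is
\[
\Pr[Y_1^{n_i}\notin R_j' \mid c_j] = \Pr[Y_1^{n_i}\notin R_j \mid c_j] \le P_e^{\rm max}(i; M_i + 1),
\]
and for $j = 1$, since $R_1' \supseteq R_1$,
\[
\Pr[Y_1^{n_i}\notin R_1' \mid c_1] \le \Pr[Y_1^{n_i}\notin R_1 \mid c_1] \le P_e^{\rm max}(i; M_i + 1).
\]
Taking the maximum over $j$ shows that this explicit $(M_i, n_i)$ code and decoder $g'$ achieves maximal error at most $P_e^{\rm max}(i; M_i + 1)$. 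The optimal $(M_i, n_i)$ code with its minimum-probability-of-error decoder does at least as well, giving $P_e^{\rm max}(i; M_i) \le P_e^{\rm max}(i; M_i + 1)$.

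The only delicate point is making sure the original $g^*$ really does partition $\mathcal{Y}^{n_i}$, which is why I invoked Assumption~\ref{assum:erase} to rule out erasure outputs in the alive regime; beyond that, the argument is purely a subcode/expanded-decision-region manipulation, with no finite-blocklength calculation required.
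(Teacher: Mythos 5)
Your argument is correct, but it takes a genuinely different route from the paper, which disposes of this lemma in one line by asserting that it ``follows from sphere-packing principles.'' Your expurgation construction---start from the optimal $(M_i+1,n_i)$ code, delete one codeword, and absorb its decision region into another codeword's region so that every conditional error probability can only stay the same or decrease---is elementary, self-contained, valid for any DMC in the alive state, and exactly mirrors the style of the paper's own proof of Lemma~\ref{lem:longerbetter} (pad with a dummy symbol, then invoke optimality of the true optimum). The sphere-packing route instead reasons geometrically that an extra codeword leaves less room for each decoding region; turning that into a rigorous monotonicity statement requires inverting a converse bound, so your direct construction is arguably the cleaner and more complete proof. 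Two small remarks. First, your appeal to Assumption~\ref{assum:erase} is slightly off: the regions $R_1,\dots,R_{M_i+1}$ need only partition the set of $?$-free output sequences, but since $p_a$ assigns probability zero to any output containing $?$, this changes nothing. Second, your closing step (``the optimal $(M_i,n_i)$ code with its minimum-probability-of-error decoder does at least as well'') implicitly reads $P_e^{\rm max}(i;M_i)$ as the best achievable maximal error over code/decoder pairs; if the decoder is literally constrained to be the average-error-minimizing one, a sentence is needed to pass from your hand-built $g'$ to that decoder. The paper makes the identical move in Lemma~\ref{lem:longerbetter}, so this is consistent with its level of rigor rather than a gap in yours.
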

\begin{IEEEproof}
Follows from sphere-packing principles.
\end{IEEEproof}

\begin{lemma}
\label{lem:repetitionbetter}
When transmitting over the alive state's memoryless channel $p_a(y|x)$, 
the optimal $(M_i = 2, n_i)$ individual message code can be taken as a binary repetition code.
\end{lemma}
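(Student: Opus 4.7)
The plan is to reduce the problem to a binary hypothesis test whose error probability depends only on the Hamming distance between the two codewords, and then to observe that this distance is maximized by the repetition code. Throughout I assume the alive-state BSC has crossover $0 < \varepsilon < 1/2$; the case $\varepsilon = 1/2$ is trivial (even a single use has error probability $1/2$) and $\varepsilon > 1/2$ reduces to $\varepsilon < 1/2$ by relabeling outputs.

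First I would fix an arbitrary $(M_i = 2, n_i)$ code with binary codewords $c_0, c_1$ at Hamming distance $d \le n_i$, use equiprobable signaling, and employ the ML decoder (which coincides with minimum Hamming distance decoding on the BSC, as already invoked in the proof of Prop.~\ref{prop:noShaRel}). I would split the $n_i$ coordinates into the $n_i - d$ agreement positions and the $d$ disagreement positions. At the agreement positions, the output has the same conditional distribution under both hypotheses, so those coordinates cancel in the likelihood ratio and may be discarded without loss. The ML decision therefore depends only on the outputs at the disagreement positions, and the error probability equals $\Pr[B > d/2] + \tfrac{1}{2}\Pr[B = d/2]$ with $B \sim \mathrm{Binomial}(d, \varepsilon)$.

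Next I would verify that this quantity is strictly decreasing in $d$ for $\varepsilon < 1/2$, either by a direct monotonicity argument on binomial tails or by observing that incrementing $d$ by one corresponds to appending one more independent BSC observation to the likelihood-ratio test, which can only help since the decoder can always ignore it. Hence the maximal error probability over all $(2, n_i)$ codes is minimized when $d = n_i$, which occurs if and only if $c_1$ is the bit-wise complement of $c_0$. By the BSC's invariance under simultaneous flipping of inputs and outputs, any such complementary pair achieves the same error probability as the canonical repetition code with codewords $0^{n_i}$ and $1^{n_i}$, establishing the lemma.

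The step I expect to require the most care is articulating cleanly the cancellation of the agreement coordinates in the likelihood ratio, which is really a sufficient-statistic argument that collapses the $n_i$-dimensional decision to the $d$-dimensional subproblem. Everything else is a routine monotonicity check on the tail of a binomial distribution and an appeal to the symmetry of the BSC.
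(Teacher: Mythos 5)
Your proposal is correct and is essentially a fully worked-out version of the paper's own (much terser) argument: ML decoding on the BSC reduces to minimum-distance decoding, the two-codeword error probability depends only on the Hamming distance $d$ and is non-increasing in $d$, and the repetition code maximizes $d$. The only quibble is that the binomial tail quantity is not \emph{strictly} decreasing in $d$ (it is constant in passing from odd $d$ to $d+1$), but your fallback argument of appending and ignoring an extra observation correctly yields the non-strict monotonicity, which is all the lemma requires.
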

\begin{IEEEproof}
Under minimum distance decoding (which yields the minimum error probability \cite[Problem 2.13]{McEliece2002})
for a code transmitted over a BSC, increasing the distance between codewords can only 
reduce error probability.  The repetition code has maximum Hamming distance between codewords.
\end{IEEEproof}

Notice that Prop.~\ref{prop:noShaRel} also directly implies Prop.~\ref{prop:cap_zero}, 
providing an alternate proof.
\begin{cor}
The Shannon capacity of a channel that dies $(\mathcal{X},p_a(y|x),p_d(y|x),p_T(t),\mathcal{Y})$
is zero.
\end{cor}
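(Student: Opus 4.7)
The plan is to derive this corollary directly from Proposition~\ref{prop:noShaRel} by contraposition. I would begin by supposing, for a contradiction, that the Shannon capacity satisfies $C > 0$. By the standard operational definition of Shannon capacity for a finite-state channel, this furnishes a fixed rate $R \in (0, C)$ and, for every $\eta > 0$, an $(n, \lceil 2^{nR} \rceil)$ block code whose maximal probability of error is at most $\eta$.

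Next I would interpret each such code as a single individual message code in the paper's framework, setting $k = 1$, $M_1 = \lceil 2^{nR} \rceil$, and $n_1 = n$. Since $M_1 \ge 2$, and since the channel model's assumption that $p_T(t^\dagger) > 0$ for some finite $t^\dagger$ guarantees $\Pr[T \ge n] > 0$ (at least for $n \le t^\dagger$), the resulting code has strictly positive expected transmission volume $V = \log M_1 \cdot \Pr[T \ge n]$ while simultaneously attaining arbitrarily small $\lambda_{\max}$ as $\eta \to 0$. Producing such a code for every $\eta$ witnesses the achievability of a positive transmission volume at Shannon reliability, in direct contradiction with Proposition~\ref{prop:noShaRel}. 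I would therefore conclude $C = 0$.

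The only delicate step is reconciling the classical notion of Shannon capacity (which allows the blocklength to grow with the target $\eta$) with the paper's more stringent definition of Shannon reliability (a fixed $(N_0, V_0)$ pair, uniformly over $\eta$). The contradiction nevertheless goes through because Proposition~\ref{prop:noShaRel} precludes \emph{any} positive volume achievable at vanishing error, with no uniform blocklength bound required; the hypothetical capacity-achieving sequence supplies positive $V$ at every $\eta > 0$, which is already more than enough to violate the proposition.
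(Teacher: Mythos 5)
Your overall strategy --- deriving the corollary by contraposition from Prop.~\ref{prop:noShaRel} --- is exactly the route the paper intends (it gives no separate proof, remarking only that Prop.~\ref{prop:noShaRel} ``directly implies'' zero capacity, having already proved Prop.~\ref{prop:cap_zero} independently via Gallager's indecomposable-FSC argument). However, the step you yourself flag as delicate is a genuine gap, and your resolution of it misreads the definition. Achievability at Shannon reliability requires a \emph{single fixed} pair $(N_0,V_0)$ with $V_0>0$ such that, for every $\eta\in(0,1)$, some code has $N\le N_0$, $V\ge V_0$, and $\lambda_{\max}\le\eta$. Your hypothetical capacity-achieving sequence does not furnish such a pair: as $\eta\to 0$ the blocklengths $n(\eta)$ must grow without bound (the alive-state DMC has zero zero-error capacity), so no finite $N_0$ dominates them; worse, the expected transmission volume of a single-block code of length $n$ is $R_T(n)\log M$, and $R_T(n)\to 0$ (it is exactly $0$ for $n$ beyond the support of $p_T$, e.g.\ the uniform death distribution of Section~\ref{sec:example}), so no $V_0>0$ is maintained either. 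Hence no contradiction with the literal statement of Prop.~\ref{prop:noShaRel} is actually reached; the proposition, as stated, is consistent with a world in which each $\eta$ admits a good code of ever-larger blocklength and ever-smaller volume.

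The repair is to invoke the \emph{content of the proof} of Prop.~\ref{prop:noShaRel} rather than its statement: that proof establishes a uniform error floor, namely a constant $K>0$ (determined by $p_T(t^{\dagger})>0$ and the minimum error of a binary hypothesis test from $t^{\dagger}$ noisy observations) such that every code carrying at least one bit, of \emph{any} blocklength, has $\lambda_{\max}>K$. Since the operational definition of positive Shannon capacity demands a sequence of positive-rate codes whose error probability tends to zero, and every member of any such sequence has error at least $K$, one concludes $C=0$ directly. As written, your argument does not supply this bridge.
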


\subsection{Finite Blocklength Channel Coding}

Before developing an optimal scheme for $\eta$-reliable communication
over a channel that dies, finite block length channel coding is reviewed.

Under our definitions, traditional channel coding results 
\cite{Slepian1963,MacMullenC1998,Laneman2006,PolyanskiyPV2008,BuckinghamV2008,WiechmanS2008}
provide information about individual message codes, determining the achievable trios
$(n_i,M_i,\lambda_{\max}(i))$.  In particular, the largest possible $M_i$ for a 
given $n_i$ and $\lambda_{\max}(i)$ is denoted $M^{*}(n_i,\lambda_{\max}(i))$.

The purpose of this work is not to improve upper and lower bounds on finite
block length channel coding, but to use existing results to study channels that die.
In fact, for the sequel, simply assume that the function $M^{*}(n_i,\lambda_{\max}(i))$ is known, 
as are codes/decoders that
achieve this value.  In principle, optimal individual message codes
may be found through exhaustive search \cite{MacMullenC1998,KaskiO2006}.
Although algebraic notions of code quality do not directly imply 
error probability quality \cite{BargM2005}, perfect codes
such as the Hamming or Golay codes may also be optimal in certain limited
cases.

Recent results comparing upper and lower bounds around Strassen's normal approximation 
to $\log M^{*}(n_i,\lambda_{\max}(i))$ \cite{Strassen1962}
have demonstrated that the approximation is quite good \cite{PolyanskiyPV2008}.
  
\begin{remark}
\label{rem:ass}
We assume that optimal $M^{*}(n_i,\eta)$-achieving individual message codes are known. 
Exact upper and lower bounds to $\log M^{*}(n_i,\eta)$ can be substituted to make our results precise.
For numerical demonstrations, we will further assume that optimal codes have performance
given by Strassen's approximation.
\end{remark}

The following expression for $\log M^{*}(n_i,\eta)$ that first appeared in \cite{Strassen1962} 
is also given as \cite[Thm.~6]{PolyanskiyPV2008}.
\begin{lemma}
\label{lemma:finiteblock}
Let $M^{*}(n_i,\eta)$ be the largest size
of an individual message code with block length $n_i$ and maximal error probability
upper bounded by $\lambda_{\max}(i) < \eta$.  Then, for any DMC with 
capacity $C$ and $0 < \eta \le 1/2$,
\[
\log M^{*}(n_i,\eta) = n_iC - \sqrt{n_i\rho} Q^{-1}(\eta) + O(\log n_i) \mbox{,}
\]
where 
\[
Q(x) = \frac{1}{\sqrt{2\pi}} \int_x^{\infty} e^{-t^2/2} \, dt\mbox{,}
\]
\[
\rho = \min_{X: C = I(X;Y)} \var\left[ \log \frac{p_{Y|X}(y|x)}{p_Y(y)}\right] \mbox{,}
\]
and standard asymptotic notation \cite{Knuth1976} is used.
\end{lemma}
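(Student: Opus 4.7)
The plan is to prove achievability and converse bounds separately, both centered on the information density random variable $i(X;Y) = \log \frac{p_{Y|X}(Y|X)}{p_Y(Y)}$, whose sum over $n$ i.i.d.\ channel uses concentrates around $nC$ with Gaussian fluctuations of order $\sqrt{n\rho}$ by the Berry--Esseen theorem.

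First, fix a capacity-achieving input distribution $P_X^*$ that attains the minimum in the definition of $\rho$, so that $\E[i(X^*;Y^*)] = C$ and $\var[i(X^*;Y^*)] = \rho$. For the achievability side, I would invoke Feinstein's lemma (or, more sharply, the DT/RCU random-coding bound of \cite{PolyanskiyPV2008}) with a threshold $\log \gamma = nC - \sqrt{n\rho}\,Q^{-1}(\eta) + \tfrac{1}{2}\log n$, applied under the product distribution $P_X^{*n}$. This produces a codebook of size $M$ with maximal error probability at most $\Pr[\sum_{k=1}^{n_i} i(X_k;Y_k) \le \log\gamma + \log M ] + M/\gamma$. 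Choosing $\log M = nC - \sqrt{n\rho}\,Q^{-1}(\eta) - c\log n$ for suitable $c$, the dominant tail probability evaluates, by Berry--Esseen, to $Q(Q^{-1}(\eta)) + O(1/\sqrt{n}) = \eta + O(1/\sqrt{n})$, and the residual $M/\gamma$ term is absorbed into the $O(\log n)$ slack.

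For the converse I would use the meta-converse of \cite{PolyanskiyPV2008} (Wolfowitz's strong converse would give a weaker third-order term), which lower bounds the error probability of any $(n_i,M,\eta)$ code by a binary hypothesis test between the true channel $P_{Y^{n_i}|X^{n_i}}$ and a carefully chosen auxiliary product output distribution $Q_{Y}^{n_i}$. Taking $Q_Y = P_{Y^*}$ and again applying Berry--Esseen to the resulting information-density tail yields an upper bound $\log M \le nC - \sqrt{n\rho}\,Q^{-1}(\eta) + O(\log n)$ that matches the achievability expression to within the claimed order.

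The principal obstacle is the treatment of channels with several capacity-achieving input distributions: the variance functional $X \mapsto \var[i(X;Y)]$ need not be constant on the capacity-achieving set, so the achievability side must pick the minimizer to shrink fluctuations, while the converse must recover exactly the same minimum $\rho$ through judicious choice of the auxiliary output distribution in the meta-converse; this matching is the delicate step and is where the analysis of \cite{PolyanskiyPV2008} is essential. A secondary technical point is sharpening the third-order term from the Berry--Esseen rate of $O(1/\sqrt{n})$, multiplied by a $\sqrt{n}$ scale, down to $O(\log n)$, which is achieved by the $\tfrac{1}{2}\log n$ refinement in the threshold; however, this affects only the implicit constants inside the $O(\log n)$ remainder and not the two leading terms that are the substance of the lemma.
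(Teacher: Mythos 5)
The paper does not prove this lemma at all: it is quoted verbatim from the literature, attributed to Strassen \cite{Strassen1962} and restated as \cite[Thm.~6]{PolyanskiyPV2008}, so there is no in-paper argument to compare against. Your sketch is essentially a correct pr\'ecis of how the result is actually established in those references: achievability via a Feinstein/DT-type threshold bound evaluated under an i.i.d.\ capacity-achieving input with minimal information-density variance, plus Berry--Esseen; converse via the meta-converse (or Strassen's original type-based argument) with a product auxiliary output distribution, plus Berry--Esseen. The leading two terms and the $O(\log n_i)$ remainder come out exactly as you describe.

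One point where your sketch understates the difficulty: in the converse, the delicate step is not only choosing the auxiliary output distribution, but controlling codes whose codewords have compositions \emph{away} from the capacity-achieving set. The per-type bound is roughly $n I(P) - \sqrt{n V(P)}\,Q^{-1}(\eta)$, and one must show that the maximum of this expression over \emph{all} input types $P$ (not just capacity-achieving ones) is $nC - \sqrt{n\rho}\,Q^{-1}(\eta) + O(\log n)$; this requires a Taylor expansion of $I(P)$ and $V(P)$ around the capacity-achieving set and a union over polynomially many types, which is where the $O(\log n)$ slack is consumed. Your remark about multiple capacity-achieving distributions is the right instinct, but the quantitative version of that matching is the real content of the cited proofs. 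As a proof \emph{sketch} deferring those details to \cite{PolyanskiyPV2008}, what you wrote is sound.
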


For the BSC($\varepsilon$), the approximation (ignoring the $O(\log n_i)$ term above) is:
\begin{equation}
\label{eq:PPV_BSC}
\log M^{*} \approx n_i(1 - h_2(\varepsilon)) - \sqrt{n_i\varepsilon(1-\varepsilon)} Q^{-1}(\eta)\log_2 \tfrac{\varepsilon}{1-\varepsilon}\mbox{,}
\end{equation}
where $h_2(\cdot)$ is the binary entropy function.
This BSC expression first appeared in \cite{Weiss1960}.

For intuition, we plot the approximate $\log M^{*}(n_i,\eta)$ function
for a BSC($\varepsilon$) in Fig.~\ref{fig:block_vol}.
Notice that $\log M^{*}$ is zero for small $n_i$ since no code can achieve the target
error probability $\eta$.  Also notice that $\log M^{*}$ is a monotonically increasing
function of $n_i$.  Moreover, notice in Fig.~\ref{fig:block_rate} that even when 
normalized, $(\log M^{*})/n_i$, is a monotonically increasing function of $n_i$.  
Therefore longer blocks provide more `bang for the buck.'  The curve in 
Fig.~\ref{fig:block_rate} asymptotically approaches capacity.

\begin{figure}[ht]
  \centering
\subfigure[]{
  \includegraphics[width=3.5in]{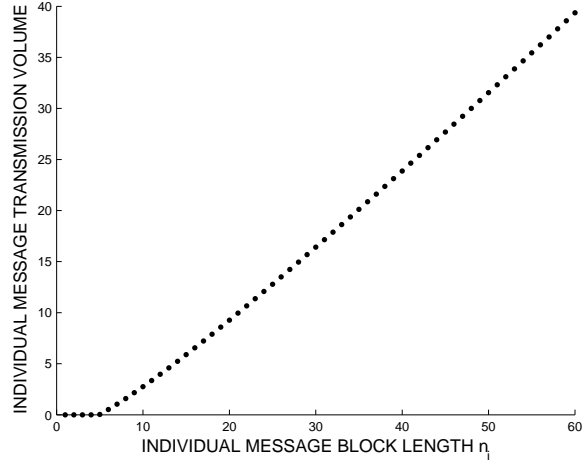}
  \label{fig:block_vol}
	}
\subfigure[]{
  \includegraphics[width=3.5in]{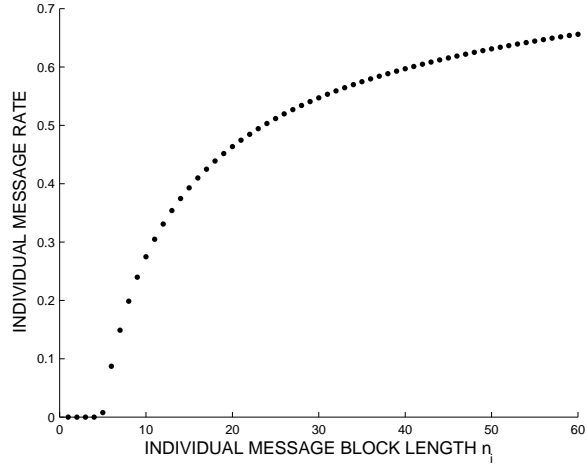} 
  \label{fig:block_rate}
	}
  \caption{\subref{fig:block_vol}. The expression \eqref{eq:PPV_BSC} for $\varepsilon = 0.01$ and $\eta = 0.001$.
	\subref{fig:block_rate}. Normalized version, $(\log M^{*}(n_i,\eta))/n_i$, for $\varepsilon = 0.01$ and $\eta = 0.001$.
	The capacity of a BSC($\varepsilon$) is $1-h_2(\varepsilon) = 0.92$.
  }
\end{figure}

\subsection{$\eta$-reliable Communication}
\label{sec:scheme}
We now describe a coding scheme that achieves positive expected transmission volume at
$\eta$-reliability.  Survival probability of the channel plays a key role in measuring 
performance.
\begin{defin}
The \emph{survival function} of a channel that dies 
$(\mathcal{X},p_a(y|x),p_d(y|x),p_T(t),\mathcal{Y})$
is $\Pr[T > t]$, is denoted $R_T(t)$, and satisfies
\[
R_T(t) = \Pr[T > t] = 1 - \sum_{\tau = 1}^t p_T(\tau) = 1 - F_T(t) \mbox{,}
\]
where $F_T$ is the cumulative distribution function.
\end{defin}
$R_T(t)$ is a non-increasing function.

\begin{prop}
\label{prop:achievability}
The transmission time-volume
\[
\left(N = \sum_{i=1}^k n_i, V = \sum_{i = 1}^k R_T(e_i) \log M^{*}(n_i,\eta) \right)
\]
is achievable at $\eta$-reliability for any sequence $(n_i)_{i=1}^k$ of individual message
codeword lengths, where $e_0 = 0, e_1 = n_1, e_2 = n_1 + n_2,\ldots, e_k = \sum_{i=1}^k n_i$.
\end{prop}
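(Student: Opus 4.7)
The plan is a direct concatenation argument. For each $i$, I would choose an individual message code/decoder $(f_i,g_i)$ of block length $n_i$ with $M_i = M^*(n_i,\eta)$ codewords, one that achieves maximal error probability at most $\eta$ when transmitted over the alive-state DMC $p_a(y|x)$; such a code exists by the definition of $M^*$. I would then form the $(M_i,n_i)_{i=1}^k$ code by concatenating these codes in sequence, with each $g_i$ conforming to Assumption~\ref{assum:erase}: if any of the $n_i$ output symbols in block $i$ equals $?$, output $\ominus$; otherwise apply the optimal DMC decoder. The transmission time is $N=\sum_{i=1}^k n_i$ by construction, so what remains is to verify $\lambda_{\max}\le\eta$ and the claimed lower bound on $V$.

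The key observation is that the symbol $?$ is produced only in the dead state, so the event $\{\hat{W}_i\neq\ominus\}$ coincides exactly with the event that the channel is alive at every time step of block $i$. Conditioned on this event, block $i$ sees only noise from the memoryless channel $p_a(y|x)$, independent of the death time, so the conditional error probability
\[
\lambda_w(i)=\Pr[\hat{W}_i\neq w\mid W_i=w,\hat{W}_i\neq\ominus]
\]
reduces to the ordinary maximal error probability of the chosen code over $p_a$, which is at most $\eta$ by construction. Maximizing over $w$ and $i$ gives $\lambda_{\max}\le\eta$.

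The volume calculation is then just linearity of expectation:
\begin{align*}
V &= \E_T\!\left[\sum_{i:\hat{W}_i\neq\ominus}\log M_i\right] \\
&= \sum_{i=1}^k \Pr[\hat{W}_i\neq\ominus]\,\log M^*(n_i,\eta) \\
&= \sum_{i=1}^k R_T(e_i)\,\log M^*(n_i,\eta),
\end{align*}
where the last equality again identifies $\{\hat{W}_i\neq\ominus\}$ with the event that $T$ outlasts block $i$, and applies the definition of the survival function $R_T$.

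I do not anticipate a substantive obstacle. The one step to pin down carefully is the interaction between Assumption~\ref{assum:erase} and the conditional definition of $\lambda_w(i)$: because the assumption deterministically ties the $\ominus$ event to the appearance of $?$, conditioning on $\{\hat{W}_i\neq\ominus\}$ strips out all dead-state contribution and leaves a clean $p_a$-DMC block, which is exactly what allows the finite-blocklength bound $M^*(n_i,\eta)$ to be imported block-by-block into the concatenated construction without any cross-block interaction.
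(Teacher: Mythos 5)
Your proposal is correct and follows essentially the same route as the paper's proof: construct optimal $(M^{*}(n_i,\eta),n_i)$ individual message codes, concatenate them into epochs, declare $\ominus$ whenever a $?$ appears in an epoch, and compute the volume by linearity of expectation using the identification of $\{\hat{W}_i\neq\ominus\}$ with channel survival through epoch $i$. Your explicit handling of the conditioning in $\lambda_w(i)$ is a slightly more careful rendering of the same argument.
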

\begin{IEEEproof}

\emph{Code Design:}
A target error probability $\eta$ and a sequence $(n_i)_{i=1}^k$ of individual message
codeword lengths are fixed.  Construct a length-$k$ sequence of $(M_i,n_i)$ 
individual message codes and individual decoding functions 
$(\mathcal{W}_i,f_i,g_i)$ that achieve optimal performance.
The size of $\mathcal{W}_i$ is $|\mathcal{W}_i| = \log M^{*}(n_i,\eta)$.  Note that 
individual decoding functions $g_i$ have range $\mathcal{W}_i$ rather than 
$\mathcal{W}_i^{\ominus}$.

\emph{Encoding:}
A codeword $W_1 = w_1$ is selected uniformly at random from the codebook 
$\mathcal{W}_1$.  The mapping of this codeword into $n_1$ channel input letters,
$X_{e_0 + 1}^{e_1} = f_1(w_1)$, is transmitted in channel usage times $n = e_0 + 1, e_0 + 2,\ldots,e_1$.

Then a codeword $W_2 = w_2$ is selected uniformly at random from the codebook 
$\mathcal{W}_2$.  The mapping of this codeword into $n_2$ channel input letters,
$X_{e_1 + 1}^{e_2} = f_2(w_2)$, is transmitted in channel usage times $n = e_1+1,e_1+2,\ldots,e_2$.

This procedure continues until the last individual message code in the code is transmitted.
That is, a codeword $W_k = w_k$ is selected uniformly at random from the codebook 
$\mathcal{W}_k$.  The mapping of this codeword into $n_k$ channel input letters,
$X_{e_{k-1} + 1}^{e_k} = f_k(w_k)$, is transmitted in channel usage times 
$n = e_{k-1}+ 1, e_{k-1}+ 2,\ldots,e_k$.

We refer to channel usage times $n \in \{e_{i-1} + 1,e_{i-1} + 2,\ldots,e_i\}$ as the $i$th transmission epoch.  

\emph{Decoding:}
For decoding, the channel output symbols for each epoch are processed separately.
If any of the channel output symbols in an epoch are erasure symbols $?$, 
then a decoding erasure $\ominus$ is declared for the message in that epoch, 
i.e.\ $\hat{W}_i = \ominus$.  Otherwise, the individual
message decoding function $g_i: \mathcal{Y}^{n_i} \to \mathcal{W}_i$ is applied
to obtain $\hat{W}_i = g_i(Y_{e_{i-1} + 1}^{e_i})$.

\emph{Performance Analysis:}
Having defined the communication scheme, we measure the error probability, transmission time, 
and expected transmission volume.

The decoder will either produce an erasure $\ominus$ or use an individual 
message decoder $g_i$.  When $g_i$ is used, the maximal error 
probability of individual message code error is bounded as
$\lambda_{\max}(i) < \eta$ 
by construction.  Since declared erasures $\ominus$ do not lead to error, and since all
$\lambda_{\max}(i) < \eta$, it follows that
\[
\lambda_{\max} < \eta \mbox{.}
\]

The transmission time is simply $N = \sum n_i$. 

Recall the definition of expected transmission volume:
\[
\E \left\{\sum_{i \in \{ 1,\ldots,k | \hat{W}_i \neq \ominus \}} \log M_i \right\} = \sum_{i \in \{ 1,\ldots,k | \hat{W}_i \neq \ominus \}} \E\left\{ \log M_i \right\}
\]
and the fact that the channel produces the erasure symbol $?$ for all channel usage
times after death, $n > T$, but not before.  Combining this with the length of an optimal code, $\log M^{*}(n_i,\eta)$,
leads to the expression
\[
\sum_{i = 1}^k \Pr[T > e_i] \log M^{*}(n_i,\eta) \mbox{,}
\]
since all individual message codewords that are received in their entirety
before the channel dies are decoded using $g_i$ whereas any
individual message codewords that are even partially cut off are declared $\ominus$.

Recalling the definition of the survival function,
the expected transmission volume of the communication scheme is
\[
\sum_{i = 1}^k R_T(e_i) \log M^{*}(n_i,\eta)
\] 
as desired.
\end{IEEEproof}

Prop.~\ref{prop:achievability} is valid for any choice of $(n_i)_{i=1}^k$.
Since $(\log M^{*})/n_i$ is monotonically increasing, it is better
to use individual message codes that are as long as possible.  With longer individual message
codes, however, there is a greater chance of many channel usages being wasted if
the channel dies in the middle of transmission.  The basic trade-off is captured in picking
the set of values $\{n_1,n_2,\ldots,n_k\}$.  For fixed and finite $N$, this involves
picking an ordered integer partition $n_1 + n_2 + \cdots + n_k = N$.  
We optimize this choice in Section~\ref{sec:optim}.

\subsection{Converse Arguments}
\label{sec:converseargs}
Since we simply have operational expressions and no informational expressions
in our development, as per Remark~\ref{rem:ass}, and since optimal individual message codes and individual message 
decoders are assumed to be used, it may seem as though converse arguments are not required.  
This would indeed follow, if the following two things were true, which follow
from Assumption~\ref{assum:erase}.  First, that there is no benefit
in trying to decode the last partially erased message block.  Second, that
there is no benefit to errors-and-erasures decoding \cite{Forney1968} by the $g_i$ for codewords 
that are received before channel death.  
Under Assumption~\ref{assum:erase}, Prop.~\ref{prop:achievability}
gives the best performance possible.

One might wonder whether Assumption~\ref{assum:erase} is needed.
That there would be no benefit in trying to decode the last partially erased block 
follows from the conjecture that an optimal individual message code would have no 
latent redundancy that could be exploited to achieve a $\lambda_{\max}(i = \rm{last}) < \eta$,
but this is a property of the actual optimal code.

Understanding the possibility of errors-and-erasures decoding \cite{Forney1968}
by the individual message decoders also requires knowing properties of actual optimal codes.  
It is unclear how the choice of threshold in errors-and-erasures decoding 
would affect the expected transmission volume
\[
\sum_{i = 1}^k (1-\xi_i) R_T(e_i) \log M^{*}(n_i,\xi_i,\eta) \mbox{,}
\]
where $\xi_i$ would be the specified erasure probability for individual message $i$,
and $M^{*}(n_i,\xi_i,\eta)$ would be the maximum individual message codebook size
under erasure probability $\xi_i$ and maximum error probability $\eta$.  

What we can say, however, is that at the level of Strassen's approximation (up to the $\log n$ term), 
$\log M^{*}(n_i,\xi_i,\eta)$ and $\log M^{*}(n_i,\eta)$ are the same \cite[Thm.~47]{Polyanskiy2010}.

\section{Optimizing the Communication Scheme}
\label{sec:optim}

In Section~\ref{sec:scheme}, we had not optimized the lengths of the
individual message codes; we do so here.  For fixed $\eta$ and $N$, we maximize the expected 
transmission volume $V$ over the choice of the ordered integer partition $n_1 + n_2 + \cdots + n_k = N$:
\begin{equation}
\label{eq:optim}
\max_{(n_i)_{i=1}^k: \sum n_i = N}  \sum_{i = 1}^k R_T(e_i) \log M^{*}(n_i,\eta) \mbox{.}
\end{equation}

For finite $N$, this optimization can be carried out by an exhaustive search
over all $2^{N-1}$ ordered integer partitions.  If the death distribution $p_T(t)$
has finite support, there is no loss of generality in considering only
finite $N$.  Since exhaustive search has exponential complexity, however, there 
is value in trying to use a simplified algorithm.  A dynamic programming
formulation for the finite horizon case is developed in Section~\ref{sec:DP}.
The next subsection develops a greedy algorithm which is applicable
to both the finite and infinite horizon cases and yields the optimal
solution for certain problems.

\subsection{A Greedy Algorithm}

To try to solve the optimization problem \eqref{eq:optim}, we propose a
greedy algorithm that optimizes blocklengths $n_i$ one by one.
\begin{algorithm}
\label{algo:greed}
\quad
\begin{enumerate}
  \item Maximize $R_T(n_1) \log M^{*}(n_1,\eta)$ through the choice of $n_1$ independently of any other $n_i$.  
  \item Maximize $R_T(e_2) \log M^{*}(n_2,\eta)$ after fixing $e_1 = n_1$, but independently of later $n_i$.  
  \item Maximize $R_T(e_3) \log M^{*}(n_3,\eta)$ after fixing $e_2$, but independently of later $n_i$.
  \item Continue in the same manner for all subsequent $n_i$.
\end{enumerate}
\end{algorithm}

Sometimes the algorithm produces the correct solution.
\begin{prop}
\label{prop:localopt}
The solution produced by the greedy algorithm, $(n_i)$, is locally optimal if
\begin{equation}
\label{eq:star}
\frac{R_T(e_i)\log M^{*}(n_i,\eta) - R_T(e_i-1)\log M^{*}(n_i - 1,\eta)}{R_T(e_{i+1})\left[\log M^{*}(n_{i+1} + 1,\eta) - \log M^{*}(n_{i+1},\eta)\right]} \ge 1 
\end{equation}
for each $i$.
\end{prop}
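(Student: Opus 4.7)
The plan is to show local optimality by verifying that no single-unit transfer between adjacent blocklengths in the greedy sequence $(n_i)$ improves the objective in \eqref{eq:optim}. Since we are optimizing an ordered integer partition of $N$, the natural notion of a neighboring partition is one obtained by moving one unit from some $n_i$ to $n_{i+1}$ or vice versa; each such perturbation alters exactly the $i$th and $(i+1)$st summands of $V = \sum_{i=1}^k R_T(e_i)\log M^{*}(n_i,\eta)$, because $e_{i+1}$ remains fixed while $e_i$ shifts by one. So the proof reduces to computing the change in $V$ under each of the two moves at each interior index $i$ and checking non-positivity.

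First I would examine the transfer $n_{i+1}\to n_i$, which replaces $(n_i,n_{i+1})$ by $(n_i+1,n_{i+1}-1)$ and shifts $e_i$ to $e_i+1$. The resulting change is
\[
\Delta^{-} = \bigl[R_T(e_i+1)\log M^{*}(n_i+1,\eta) - R_T(e_i)\log M^{*}(n_i,\eta)\bigr] + R_T(e_{i+1})\bigl[\log M^{*}(n_{i+1}-1,\eta) - \log M^{*}(n_{i+1},\eta)\bigr].
\]
The first bracket is $\le 0$ by the greedy choice of $n_i$, which maximized $R_T(e_{i-1}+n_i)\log M^{*}(n_i,\eta)$ and so in particular dominates its increment-by-one neighbor. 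The second bracket is $\le 0$ because $\log M^{*}(\cdot,\eta)$ is monotonically non-decreasing (Fig.~\ref{fig:block_vol}). Hence $\Delta^{-}\le 0$ unconditionally, so this direction of perturbation never improves $V$ — independent of \eqref{eq:star}.

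Next I would examine the opposite transfer $n_i \to n_{i+1}$, which replaces $(n_i,n_{i+1})$ by $(n_i-1,n_{i+1}+1)$, shifting $e_i$ to $e_i-1$. The change is
\[
\Delta^{+} = \bigl[R_T(e_i-1)\log M^{*}(n_i-1,\eta) - R_T(e_i)\log M^{*}(n_i,\eta)\bigr] + R_T(e_{i+1})\bigl[\log M^{*}(n_{i+1}+1,\eta) - \log M^{*}(n_{i+1},\eta)\bigr],
\]
and $\Delta^{+}\le 0$ rearranges to
\[
R_T(e_i)\log M^{*}(n_i,\eta) - R_T(e_i-1)\log M^{*}(n_i-1,\eta) \ge R_T(e_{i+1})\bigl[\log M^{*}(n_{i+1}+1,\eta) - \log M^{*}(n_{i+1},\eta)\bigr],
\]
which is exactly \eqref{eq:star} after dividing by the non-negative right-hand side. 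Combining the two directions, greedy handles one side automatically and \eqref{eq:star} handles the other; together they certify that no adjacent-pair unit shift improves $V$, establishing local optimality.

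The only wrinkle I anticipate is bookkeeping at the boundaries: at $i=1$ one uses $R_T(e_0)=R_T(0)=1$, and at $i=k$ there is no $n_{k+1}$, so the corresponding transfer does not exist and \eqref{eq:star} is vacuous there. A minor degenerate case is when the denominator of \eqref{eq:star} vanishes (e.g.\ when $\log M^{*}$ is saturated at its floor of $0$); in that case $\Delta^{+}$ reduces to the non-positive bracket alone, still giving no improvement, so the conclusion is unaffected. I do not expect any deep obstacle beyond careful sign-tracking and invoking the monotonicity of $\log M^{*}$ together with the greedy one-step optimality of each $n_i$.
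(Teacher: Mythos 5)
Your proposal is correct and follows essentially the same route as the paper: both arguments test the two one-unit perturbations of each epoch boundary $e_i$, dispose of the rightward shift using the greedy one-step optimality of $n_i$ together with the monotonicity of $\log M^{*}(\cdot,\eta)$, and identify the non-improvement condition for the leftward shift with \eqref{eq:star}. Your explicit handling of the boundary indices and of a vanishing denominator is a small amount of added care beyond the paper's version, but the substance is identical.
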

\begin{IEEEproof}
The solution of the greedy algorithm partitions time using a set of epoch boundaries $(e_i)$.
The proof proceeds by testing whether local perturbation of an arbitrary 
epoch boundary can improve performance.  There are two possible perturbations: a shift to the left
or a shift to the right.

First consider shifting an arbitrary epoch boundary $e_i$ to the right by one.  This makes the
left epoch longer and the right epoch shorter.  Lengthening the left epoch does not improve performance
due to the greedy optimization of the algorithm.  Shortening the right epoch does not improve
performance since $R_T(e_i)$ remains unchanged whereas $\log M^{*}(n_i,\eta)$ does not increase since
$\log M^{*}$ is a non-decreasing function of $n_i$.

Now consider shifting an arbitrary epoch boundary $e_i$ to the left by one.  This
makes the left epoch shorter and the right epoch longer.  Reducing the left epoch will
not improve performance due to greediness, but enlarging the right epoch might improve performance, 
so the gain and loss must be balanced.

The loss in performance (a positive quantity) for the left epoch is
\[
\Delta_l = R_T(e_i)\log M^{*}(n_i,\eta) - R_T(e_i-1)\log M^{*}(n_i - 1,\eta)
\]
whereas the gain in performance (a positive quantity) for the right epoch is
\[
\Delta_r = R_T(e_{i+1})\left[\log M^{*}(n_{i+1} + 1,\eta) - \log M^{*}(n_{i+1},\eta)\right] \mbox{.}
\]
If $\Delta_l \ge \Delta_r$, then perturbation will not improve performance.
The condition may be rearranged as 
\[
\frac{R_T(e_i)\log M^{*}(n_i,\eta) - R_T(e_i-1)\log M^{*}(n_i - 1,\eta)}{R_T(e_{i+1})\left[\log M^{*}(n_{i+1} + 1,\eta) - \log M^{*}(n_{i+1},\eta)\right]} \ge 1 
\]
This is the condition \eqref{eq:star}, so the left-perturbation does not improve
performance.
Hence, the solution produced by the greedy algorithm is locally optimal.
\end{IEEEproof}
\begin{prop}
The solution produced by the greedy algorithm, $(n_i)$, is globally optimal if
\begin{equation}
\label{eq:starK}
\frac{R_T(e_i)\log M^{*}(n_i,\eta) - R_T(e_i-K_i)\log M^{*}(n_i - K_i,\eta)}{R_T(e_{i+1})\left[\log M^{*}(n_{i+1} + K_i,\eta) - \log M^{*}(n_{i+1},\eta)\right]} \ge 1 
\end{equation}
for each $i$, and any non-negative integers $K_i \le n_i$.
\end{prop}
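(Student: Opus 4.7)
The plan is to extend the local-optimality argument of Prop.~\ref{prop:localopt} to handle single-boundary perturbations of arbitrary magnitude rather than only unit magnitude, and then to reduce any alternative partition to a chain of such single-boundary perturbations applied from the greedy solution.

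First I consider shifting a single epoch boundary $e_i$ leftward by $K_i$ units while holding all other boundaries fixed. This moves $K_i$ channel uses from the $i$-th epoch into the $(i{+}1)$-th epoch: $n_i$ decreases to $n_i - K_i$, $n_{i+1}$ increases to $n_{i+1} + K_i$, and $e_i$ shifts to $e_i - K_i$ while $e_{i+1}$ is unchanged. As in the proof of Prop.~\ref{prop:localopt}, the loss on the left epoch and the gain on the right epoch are
\begin{align*}
\Delta_l &= R_T(e_i)\log M^{*}(n_i,\eta) - R_T(e_i - K_i)\log M^{*}(n_i - K_i,\eta)\mbox{,}\\
\Delta_r &= R_T(e_{i+1})\bigl[\log M^{*}(n_{i+1} + K_i,\eta) - \log M^{*}(n_{i+1},\eta)\bigr]\mbox{,}
\end{align*}
and (\ref{eq:starK}) is precisely the statement $\Delta_l \ge \Delta_r$, so no such leftward shift improves the objective. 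Rightward shifts of $e_i$ by any magnitude are dispatched as in Prop.~\ref{prop:localopt}: the greedy choice of $n_i$ at step $i$ precludes an improvement on the left epoch, and the monotonicity of $\log M^{*}$ in blocklength precludes an improvement on the right.

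To turn this single-boundary analysis into a global comparison, I sweep through any alternative partition from left to right, resolving each discrepancy between its boundaries and greedy's by a single-boundary shift of the appropriate sign and magnitude. The greedy algorithm's recursive structure means that the condition governing each successive shift is either a fresh instance of (\ref{eq:starK}) or an instance of the greedy-choice inequality at the corresponding step, so every shift in the chain is non-improving, and chaining the inequalities yields that the value of any alternative partition is at most that of the greedy solution.

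The main obstacle I anticipate is the bookkeeping when the alternative partition has a different number of epochs than greedy. Inserting an extra boundary can be regarded as a degenerate single-boundary shift from a coincident boundary (equivalently, allowing $K_i$ in (\ref{eq:starK}) to equal $n_i$, which collapses an epoch), and removing a boundary is the reverse operation; in both cases the validity of (\ref{eq:starK}) over the full range $0 \le K_i \le n_i$ is what makes the sweep go through. A careful induction on the number of boundaries in which the alternative and the greedy solution differ then completes the argument without requiring any tool beyond the single-boundary analysis already in hand.
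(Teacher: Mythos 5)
Your proposal takes essentially the same approach as the paper, whose entire proof is the single sentence that the result follows by repeating the local-optimality perturbation argument of Prop.~\ref{prop:localopt} for shifts of any admissible size $K_i$; your computation of $\Delta_l$ and $\Delta_r$ for a size-$K_i$ boundary shift and the identification of $\Delta_l \ge \Delta_r$ with \eqref{eq:starK} is exactly that argument. Your additional sweep/induction sketch for reducing an arbitrary competing partition to a chain of single-boundary shifts supplies detail that the paper omits entirely, so no discrepancy with the paper's reasoning arises.
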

\begin{IEEEproof}
The result follows by repeating the argument for local optimality in Prop.~\ref{prop:localopt}
for shifts of any admissible size $K_i$. 
\end{IEEEproof}

There is an easily checked special case of global optimality condition \eqref{eq:starK} under
the Strassen approximation, given in the forthcoming Prop.~\ref{prop:epochsizeorder}.
\begin{lemma}
\label{lem:nondecr}
The function $\log M^{*}_{S}(z,\eta) - \log M^{*}_{S}(z-K,\eta)$ is a non-decreasing
function of $z$ for any $K$, where
\begin{equation}
\label{eq:Strass}
\log M^{*}_{S}(z,\eta) = zC - \sqrt{z\rho} Q^{-1}(\eta)
\end{equation}
is Strassen's approximation.
\end{lemma}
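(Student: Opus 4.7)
The plan is to substitute Strassen's expression directly and reduce the claim to monotonicity of an elementary function. Writing out the difference,
\[
\log M^{*}_{S}(z,\eta) - \log M^{*}_{S}(z-K,\eta) = KC - Q^{-1}(\eta)\sqrt{\rho}\bigl(\sqrt{z}-\sqrt{z-K}\bigr),
\]
the $KC$ term is constant in $z$, so I only need to track the behavior of the second term in $z$. Rationalizing,
\[
\sqrt{z} - \sqrt{z-K} = \frac{K}{\sqrt{z} + \sqrt{z-K}},
\]
which is non-negative and non-increasing in $z$ for $z \ge K$, since the denominator is non-decreasing. Because $Q^{-1}(\eta) \ge 0$ in the regime $0 < \eta \le 1/2$ inherited from Lemma~\ref{lemma:finiteblock}, subtracting a non-negative, non-increasing quantity from a constant yields a non-decreasing function of $z$, which is exactly the claim.

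An equivalent and perhaps more conceptual route is to observe that $\log M^{*}_{S}(\cdot,\eta)$ is convex on $z > 0$: its second derivative is $\tfrac{1}{4}\sqrt{\rho}\, Q^{-1}(\eta)\, z^{-3/2}$, which is non-negative whenever $\eta \le 1/2$. For any convex function $f$, the fixed-width chord difference $f(z) - f(z-K)$ is non-decreasing in $z$ by the standard increasing-slopes property of convex functions, so the lemma follows in one line from convexity.

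The only point that needs care, rather than any genuine obstacle, is the sign of $Q^{-1}(\eta)$: the monotonicity direction is governed by the regime $\eta \le 1/2$, and would reverse for $\eta > 1/2$ (where $\log M^{*}_{S}$ is concave in $z$). Since Lemma~\ref{lemma:finiteblock} is explicitly stated for $0 < \eta \le 1/2$, this hypothesis is in force throughout the paper's operating regime and no extra assumption need be introduced.
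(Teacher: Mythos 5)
Your proof is correct and follows essentially the same route as the paper's: both arguments reduce the claim to the fact that $\sqrt{z}-\sqrt{z-K}$ is non-increasing in $z$ (i.e., concavity of $\sqrt{\cdot}$) and then account for the constant $KC$ term. Your version is slightly more careful in that it makes explicit the sign condition $Q^{-1}(\eta)\ge 0$ for $\eta\le 1/2$, which the paper uses implicitly when it multiplies its inequality by $\sqrt{\rho}\,Q^{-1}(\eta)$.
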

\begin{IEEEproof}
Essentially follows from the fact that $\sqrt{z}$ is a concave $\cap$
function in $z$.  More specifically $\sqrt{z}$ satisfies
\[
-\sqrt{z} + \sqrt{z - K} \le -\sqrt{z+1} + \sqrt{z+1-K}
\]
for $K \le z$.  This implies:
\[
- \sqrt{z}\sqrt{\rho} Q^{-1}(\eta) + \sqrt{z-K}\sqrt{\rho} Q^{-1}(\eta) \le - \sqrt{z+1}\sqrt{\rho} Q^{-1}(\eta) + \sqrt{z+1-K}\sqrt{\rho} Q^{-1}(\eta)\mbox{.}
\]
Adding the positive constant $KC$ to both sides, in the form $zC - zC + KC$ on the left and in the form $(z+1)C - (z+1)C + KC$ on the right
yields
\begin{align*}
&zC - \sqrt{z\rho} Q^{-1}(\eta) - (z-K)C + \sqrt{z-K}\sqrt{\rho} Q^{-1}(\eta) \\ \notag
&\quad \le (z+1)C - \sqrt{z+1}\sqrt{\rho} Q^{-1}(\eta) - (z+1-K)C + \sqrt{z+1-K}\sqrt{\rho} Q^{-1}(\eta)
\end{align*}
and so
\[
\left[\log M^{*}_{S}(z,\eta) - \log M^{*}_{S}(z-K,\eta)\right] \le \left[\log M^{*}_{S}(z+1,\eta) - \log M^{*}_{S}(z+1-K,\eta)\right]\mbox{.}
\]
\end{IEEEproof}

\begin{prop}
\label{prop:epochsizeorder}
If the solution produced by the greedy algorithm using Strassen's approximation
\eqref{eq:Strass} satisfies $n_1 \ge n_2 \ge \cdots \ge n_k$,
then condition \eqref{eq:starK} for global optimality is satisfied.
\end{prop}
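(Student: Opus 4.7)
The plan is to verify \eqref{eq:starK} directly for each $i$ and each admissible $K_i \le n_i$. Writing $f(z) := \log M^{*}_{S}(z,\eta)$ and $K := K_i$ for brevity, the inequality \eqref{eq:starK} is equivalent to
\[
R_T(e_i)\,f(n_i) + R_T(e_{i+1})\,f(n_{i+1}) \ge R_T(e_i - K)\,f(n_i - K) + R_T(e_{i+1})\,f(n_{i+1} + K),
\]
so the task is to show that the greedy two-epoch assignment $(n_i,n_{i+1})$ dominates every $K$-shifted alternative $(n_i - K, n_{i+1} + K)$.

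First I would invoke Lemma~\ref{lem:nondecr}: under Strassen's approximation, the increment $f(z) - f(z-K)$ is non-decreasing in $z$ (a convexity statement arising from the concavity of $\sqrt{z}$). Combined with the hypothesis $n_i \ge n_{i+1}$, this yields the clean comparison $f(n_i) - f(n_i - K) \ge f(n_{i+1} + K) - f(n_{i+1})$ in the regime $K \le n_i - n_{i+1}$. Pairing with the survival-function monotonicity $R_T(e_i - K) \ge R_T(e_i) \ge R_T(e_{i+1})$ and the greedy first-order optimality at stage $i$ (which enforces $R_T(e_i) f(n_i) \ge R_T(e_i - K) f(n_i - K)$, hence $\Delta_l(K) \ge 0$) then delivers \eqref{eq:starK} on this sub-range.

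For the complementary regime $K > n_i - n_{i+1}$, I would extend the bound by induction on $K$, telescoping both sides into unit-$K$ increments. Here the greedy first-order condition at stage $i+1$ becomes essential: it pins $n_{i+1}$ as the discrete argmax of $R_T(e_i + m)\,f(m)$, giving $R_T(e_{i+1} + j)\,f(n_{i+1} + j) \le R_T(e_{i+1})\,f(n_{i+1})$ for every $j \ge 0$. This upper-bounds the right-hand side $\Delta_r(K) = R_T(e_{i+1})[f(n_{i+1} + K) - f(n_{i+1})]$ in a form involving only $R_T$-increments and values of $f$, which can be matched termwise against a telescoping lower bound on $\Delta_l(K) = R_T(e_i)\,f(n_i) - R_T(e_i - K)\,f(n_i - K)$ coming from the greedy condition at stage $i$ together with Lemma~\ref{lem:nondecr}.

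The hard part will be this second regime, where the direct convexity comparison of $f$-increments fails once $n_{i+1} + K$ exceeds $n_i$; here the greedy's lower bound on $\Delta_l$ is only weakly larger than what a naive $R_T$-weighted convexity estimate gives for $\Delta_r$, and the two bounds must be reconciled delicately. Closing the chained inequality relies on the Strassen-specific form of the $\sqrt{z}$ correction interacting with the $R_T$-increments, and on the observation that the monotonicity $n_1 \ge \cdots \ge n_k$ already encodes a sufficient decay rate of $R_T$ across successive epochs to make further transfer of channel uses from epoch $i$ to epoch $i+1$ unprofitable, regardless of how large $K$ is taken within the admissible range.
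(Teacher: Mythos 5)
Your split into the regimes $K \le n_i - n_{i+1}$ and $K > n_i - n_{i+1}$ is a sharp and correct observation: Lemma~\ref{lem:nondecr} orders two length-$K$ increments of $f := \log M^{*}_{S}(\cdot,\eta)$ by their right endpoints, so $f(n_i)-f(n_i-K) \ge f(n_{i+1}+K)-f(n_{i+1})$ requires $n_i \ge n_{i+1}+K$, not merely $n_i \ge n_{i+1}$. (The paper's proof applies the lemma to obtain \eqref{eq:monot} for every admissible $K$; for equal epoch sizes and the convex Strassen expression, \eqref{eq:monot} actually reverses once $K > n_i - n_{i+1}$, so your identification of the second regime as the hard case is genuinely more careful than the printed argument.) Nevertheless, neither half of your argument closes. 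In the first regime, the increment comparison together with $R_T(e_i)\ge R_T(e_{i+1})$ gives $\Delta_r \le R_T(e_i)\left[f(n_i)-f(n_i-K)\right]$; but since $R_T(e_i-K)\ge R_T(e_i)$, the numerator of \eqref{eq:starK} also satisfies $\Delta_l = R_T(e_i)f(n_i)-R_T(e_i-K)f(n_i-K) \le R_T(e_i)\left[f(n_i)-f(n_i-K)\right]$. Bounding both $\Delta_l$ and $\Delta_r$ above by the same quantity does not give $\Delta_l \ge \Delta_r$, and the greedy fact $\Delta_l \ge 0$ does not absorb the deficit $\left[R_T(e_i-K)-R_T(e_i)\right]f(n_i-K)$. (The paper's closing step, substituting $R_T(e_i)$ for $R_T(e_i-K)$ ``in one place,'' has the same directionality problem, so you have reproduced rather than repaired it.)

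The second regime is the substantive gap: you describe a telescoping induction and explicitly defer the ``delicate'' reconciliation, so the key inequality for $K > n_i - n_{i+1}$ is never derived. Note that in the geometric case with equal epoch sizes --- precisely the case on which Prop.~\ref{prop:equalsize} relies --- every $K \ge 1$ falls in your second regime, so the proposal as written establishes nothing there. What is missing in both regimes is a quantitative assembly of the greedy optimality conditions: for instance, greedy optimality at stage $i+1$ gives $R_T(e_{i+1})f(n_{i+1}) \ge R_T(e_{i+1}+K)f(n_{i+1}+K)$, hence $\Delta_r \le \left[R_T(e_{i+1})-R_T(e_{i+1}+K)\right]f(n_{i+1}+K)$, which must then be matched against an explicit lower bound on $\Delta_l$ obtained from greedy optimality at stage $i$. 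You name these ingredients but never chain them into an inequality that terminates in \eqref{eq:starK}. As it stands, the proposal is a plan with the decisive steps marked as open, not a proof.
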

\begin{IEEEproof}
Since $R_T(\cdot)$ is a non-increasing survival function, 
\begin{equation}
\label{eq:survival}
R_T(e_i - K) \ge R_T(e_{i+1})
\end{equation}
for the non-negative integer $K$.
Since the function $\left[\log M^{*}_{S}(z,\eta) - \log M^{*}_{S}(z-K,\eta)\right]$ is a non-decreasing
function of $z$ by Lem.~\ref{lem:nondecr}, and since the $n_i$ are in non-increasing order,
\begin{equation}
\label{eq:monot}
\log M^{*}_{S}(n_i,\eta) - \log M^{*}_{S}(n_i - K,\eta) \ge \log M^{*}_{S}(n_{i+1} + K,\eta) - \log M^{*}_{S}(n_{i+1},\eta)\mbox{.}
\end{equation}
Taking products of \eqref{eq:survival} and \eqref{eq:monot} and 
rearranging yields the condition:
\[
\frac{R_T(e_i - K)\left[ \log M^{*}_{S}(n_i,\eta) - \log M^{*}_{S}(n_i - K,\eta) \right]}{R_T(e_{i+1})\left[ \log M^{*}_{S}(n_{i+1} + K,\eta) - \log M^{*}_{S}(n_{i+1},\eta)\right]} \ge 1 \mbox{.}
\]
Since $R_T(\cdot)$ is a non-increasing survival function,
\[
R_T(e_i - K) \ge R_T(e_i) \ge R_T(e_{i+1}) \mbox{.}
\]
Therefore the global optimality condition \eqref{eq:starK} is also satisfied, by substituting $R_T(e_i)$ for $R_T(e_i - K)$
in one place.
\end{IEEEproof}

\subsection{Geometric Death Distribution}
A common failure mode for systems that do not age 
is a geometric death time $T$ \cite{Davis1952}:
\[
p_T(t) = \alpha(1-\alpha)^{(t-1)} \mbox{,}
\]
and
\[
R_T(t) = (1-\alpha)^t \mbox{,}
\]
where $\alpha$ is the death time parameter.

\begin{prop}
\label{prop:equalsize}
When $T$ is geometric, then the solution to \eqref{eq:optim} 
under Strassen's approximation yields equal epoch sizes.
This optimal size is given by
\[
\argmax_{\nu} R_T(\nu) \log M^{*}(\nu,\eta) \mbox{.}
\] 
\end{prop}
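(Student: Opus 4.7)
The plan is to exploit the memoryless property of the geometric distribution to show that every stage of Algorithm~\ref{algo:greed} produces the same blocklength, and then to invoke Prop.~\ref{prop:epochsizeorder} to upgrade the local optimality of the greedy solution to the global optimality condition \eqref{eq:starK} under Strassen's approximation.

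First, I would use the identity
\[
R_T(e_i) = (1-\alpha)^{e_i} = (1-\alpha)^{e_{i-1}}(1-\alpha)^{n_i} = R_T(e_{i-1}) R_T(n_i)
\]
to rewrite the $i$-th greedy objective as
\[
R_T(e_i)\log M^{*}(n_i,\eta) = R_T(e_{i-1}) \cdot R_T(n_i) \log M^{*}(n_i,\eta).
\]
At stage $i$ the prefactor $R_T(e_{i-1})$ is a strictly positive constant that does not depend on $n_i$, so maximizing the stage-$i$ objective over $n_i$ is equivalent to maximizing $R_T(n_i)\log M^{*}(n_i,\eta)$. This one-dimensional optimization is independent of $i$, so greedy returns the same blocklength $\nu^{*} = \argmax_\nu R_T(\nu)\log M^{*}(\nu,\eta)$ at every stage.

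Second, since the greedy output satisfies $n_1 = n_2 = \cdots = n_k = \nu^{*}$, the monotonicity hypothesis $n_1 \ge n_2 \ge \cdots \ge n_k$ of Prop.~\ref{prop:epochsizeorder} holds trivially with equality. Applying that proposition (which in turn rests on Lem.~\ref{lem:nondecr} and thus requires Strassen's approximation \eqref{eq:Strass}) shows that the global optimality condition \eqref{eq:starK} is satisfied, and hence the equal-epoch solution maximizes \eqref{eq:optim}.

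The only potential obstacle I see is a boundary effect: if $N$ is not an exact multiple of $\nu^{*}$, the ordered integer partition constraint $\sum n_i = N$ forces a single residual epoch of a different size. I would handle this either by stating the result for the infinite-horizon regime in which the residual contributes negligibly, or by noting that the argument above proves optimality among partitions into epochs of length $\nu^{*}$ plus one short epoch whose position does not affect the sum by the same memoryless decomposition. Once this minor caveat is dispensed with, the proof is essentially the one-line memorylessness computation combined with the previously established Prop.~\ref{prop:epochsizeorder}.
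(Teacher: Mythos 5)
Your proof is correct and follows essentially the same route as the paper: factor the geometric survival function as $R_T(e_i)=R_T(e_{i-1})R_T(n_i)$ so that each greedy stage maximizes the same objective, then invoke Prop.~\ref{prop:epochsizeorder} since the equal epochs trivially satisfy the non-increasing ordering. Your added remark about the residual epoch when $N$ is not a multiple of $\nu^{*}$ is a fair observation that the paper's own proof also leaves implicit.
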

\begin{IEEEproof}
Begin by showing that Algorithm~\ref{algo:greed} will produce a solution with equal epoch sizes.
Recall that the survival function of a geometric random variable with parameter $0 < \alpha \le 1$ is
$R_T(t) = (1-\alpha)^t$.  Therefore the first step of the algorithm will choose $n_1$ as
\[
n_1 = \argmax_{\nu} (1-\alpha)^{\nu} \log M^{*}(\nu,\eta) \mbox{.}
\]
The second step of the algorithm will choose
\begin{align*}
n_2 &= \argmax_{\nu} (1-\alpha)^{n_1}(1-\alpha)^{\nu} \log M^{*}(\nu,\eta) \\ \notag
&= \argmax_{\nu} (1-\alpha)^{\nu} \log M^{*}(\nu,\eta) \mbox{,}
\end{align*}
which is the same as $n_1$.  In general,
\begin{align*}
n_i &= \argmax_{\nu} (1-\alpha)^{e_{i-1}}(1-\alpha)^{\nu} \log M^{*}(\nu,\eta) \\ \notag
&= \argmax_{\nu} (1-\alpha)^{\nu} \log M^{*}(\nu,\eta) \mbox{,}
\end{align*}
so $n_1 = n_2 = \cdots$.

Such a solution satisfies $n_1 \ge n_2 \ge \cdots$ 
and so it is optimal by Prop.~\ref{prop:epochsizeorder}.
\end{IEEEproof}

The optimal epoch size for geometric death under Strassen's approximation can
be found analytically, \cite[Sec.~6.4.2]{Varshney2010}.  Consider
the setting when the alive state corresponds to a BSC($\varepsilon$). 
For fixed crossover probability $\varepsilon$ and target error probability $\eta$, the
optimal epoch size is plotted as a function of $\alpha$ in Fig.~\ref{fig:blocksizes}. The less likely the channel is to
die early, the longer the optimal epoch length.
\begin{figure}
  \centering
  \includegraphics[width=3.5in]{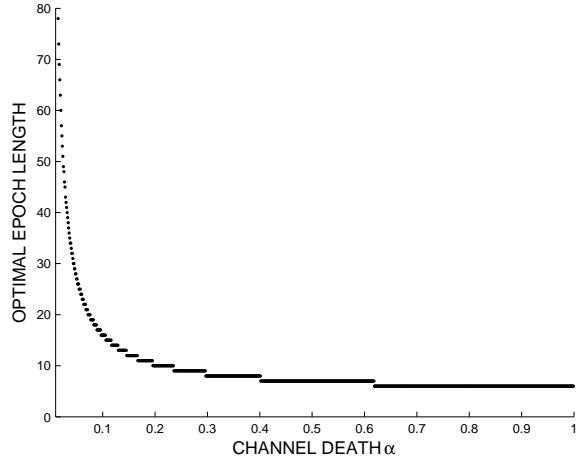}
  \caption{Optimal epoch lengths under Strassen's approximation
	for an $(\varepsilon,\alpha)$ BSC-geometric channel that dies for $\varepsilon = 0.01$ and $\eta = 0.001$.}
  \label{fig:blocksizes}
\end{figure}

Alternatively, rather than fixing $\eta$, one might fix the number of bits to be communicated
and find the best level of reliability that is possible.  Fig.~\ref{fig:bscgeom} shows the best $\lambda_{\max} = \eta$
that is possible when communicating $5$ bits over a BSC($\varepsilon$)-geometric($\alpha$) channel that dies.
\begin{figure}
  \centering
  \includegraphics[width=3.5in]{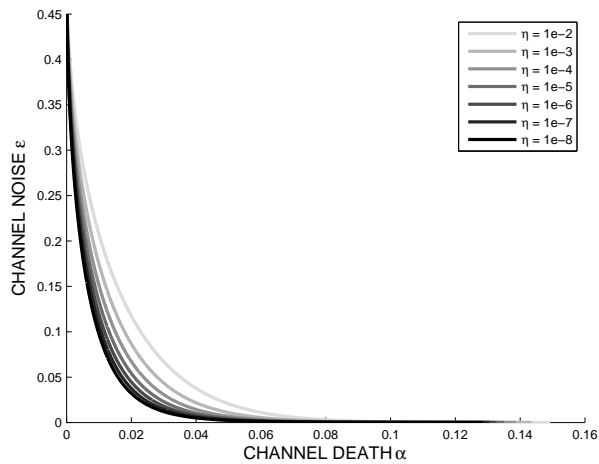}
  \caption{Achievable $\eta$-reliability in sending $5$ bits 
	over $(\varepsilon,\alpha)$ BSC-geometric channel that dies.}
  \label{fig:bscgeom}
\end{figure}

Notice that the geometric death time distribution
forms a boundary case for Prop.~\ref{prop:epochsizeorder}.
One can consider discrete Weibull death time distributions \cite{KhanKA1989}
to see what happens with heavier tails:
\[
p_T(t) = (1-\alpha)^{(t-1)^{\beta}} - (1-\alpha)^{t^{\beta}} \mbox{,}
\]
and
\[
R_T(t) = (1-\alpha)^{t^{\beta}} \mbox{,}
\]
where $\beta$ is the shape parameter.  When $\beta > 1$, the tail is 
lighter than geometric and when $\beta < 1$, the tail is heavier than
geometric.  

With heavy-tailed death distributions, the greedy algorithm gives
epoch sizes that are non-increasing: $n_1 \ge n_2 \ge \cdots$, and 
therefore optimal;
it is better to send long blocks first and then send shorter ones.

\subsection{Dynamic Programming}
\label{sec:DP}
The greedy algorithm of the previous section solves \eqref{eq:optim}
under certain conditions.  For finite $N$, a dynamic program (DP) 
may be used to solve \eqref{eq:optim} under any conditions.
To develop the DP formulation \cite{Bertsekas2005}, we assume that  
channel state feedback (whether the channel output is $?$ or whether it is some other symbol)
is available to the transmitter, however solving the DP will show that channel 
state feedback is not required.  

\emph{System Dynamics:}
\begin{equation}
\label{eq:dynamics}
\begin{bmatrix} \zeta_n \\ \omega_n \end{bmatrix} = \begin{bmatrix} (\zeta_{n-1} + 1)\hat{s}_{n-1} \\ \omega_{n-1}\kappa_{n-1} \end{bmatrix} \mbox{,}
\end{equation}
for $n = 1,2,\ldots,N+1$.  The following state variables, disturbances, and controls are used:
\begin{itemize}
  \item $\zeta_n \in \mathbb{Z}^{*}$ is a state variable that counts the location in the current transmission epoch,
  \item $\omega_n \in \{0,1\}$ is a state variable that indicates whether the channel is alive ($1$) or dead ($0$),
  \item $\kappa_n \in \{0,1\} \sim \Bern\left( R_T(n)\right)$ is a disturbance that kills ($0$) or revives ($1$) the channel in the next time step, and
  \item $\hat{s}_n \in \{0,1\}$ is a control input that starts ($0$) or continues ($1$) a transmission epoch in the next time step.
\end{itemize}

\emph{Initial State:}
Since the channel starts alive (note that $R_T(1) = 1$) and since the first transmission epoch starts at the beginning of time, 
\begin{equation}
\label{eq:initialstate}
\begin{bmatrix} \zeta_1 \\ \omega_1\end{bmatrix} = \begin{bmatrix} 0 \\ 1 \end{bmatrix} \mbox{.}
\end{equation}

\emph{Additive Cost:} Transmission volume $\log M^{*}(\zeta_{n} + 1,\eta)$ is credited if the channel is alive (i.e.\ $\omega_n = 1$)
and the transmission epoch is to be restarted in the next time step (i.e.\ $1-\hat{s}_n = 1$).  This implies a cost function
\begin{equation}
\label{eq:additivecost}
c_n(\zeta_n,\omega_n,\hat{s}_n) = -(1-\hat{s}_n)\omega_n \log M^{*}(\zeta_n + 1,\eta) \mbox{.}
\end{equation}
This is negative so that smaller is better.

\emph{Terminal Cost:} There is no terminal cost: $c_{N+1} = 0$.

\emph{Cost-to-go:} From time $n$ to time $N+1$ is:
\[
\E_{\vec{\kappa}}\left\{ \sum_{i=n}^N c_i(\zeta_i,\omega_i,\hat{s}_i)\right\}
= -\E_{\vec{\kappa}}\left\{ \sum_{i=n}^N (1-\hat{s}_i)\omega_i \log M^{*}(\zeta_i + 1,\eta)\right\} \mbox{.}
\]

Notice that the state variable $\zeta_n$ which counts epoch time is known to the transmitter and 
is determinable by the receiver through transmitter simulation.  The state variable $\omega_n$ indicates 
the channel state and is known to the receiver by observing the channel output. It may be communicated
to the transmitter through the channel state feedback.  The following result follows directly.
\begin{prop}
A communication scheme that follows the dynamics \eqref{eq:dynamics} and additive cost \eqref{eq:additivecost} 
achieves the transmission time-volume
\[
\left(N, V = -\E\left[ \sum_{n = 1}^N c_n \right]\right)
\]
at $\eta$-reliability.
\end{prop}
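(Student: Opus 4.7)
The plan is to reinterpret the DP trajectory as an instance of the achievability scheme from Prop.~\ref{prop:achievability} and then verify that $-\E[\sum_{n=1}^N c_n]$ coincides with the expected transmission volume of that scheme.

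First I would translate DP objects into coding-scheme objects. By the dynamics \eqref{eq:dynamics}, the control sequence $(\hat{s}_n)_{n=1}^N$ partitions $\{1,\ldots,N\}$ into transmission epochs: an epoch terminates at every index $n$ for which $\hat{s}_n = 0$, and at such an index the counter satisfies $\zeta_n + 1 = n_i$, the length of the just-completed epoch. This produces an ordered integer partition $n_1 + n_2 + \cdots + n_k = N$ with boundaries $e_i = \sum_{j \le i} n_j$. Within each epoch, deploy the optimal $(M^{*}(n_i,\eta), n_i)$ individual message code used in the proof of Prop.~\ref{prop:achievability}; $\eta$-reliability and the transmission time $N$ then follow immediately from that proposition.

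It remains to check that $-\E\left[\sum_{n=1}^N c_n\right]$ equals $\sum_{i=1}^k R_T(e_i) \log M^{*}(n_i,\eta)$, which by Prop.~\ref{prop:achievability} is the expected transmission volume. From \eqref{eq:additivecost}, $c_n$ vanishes unless $\hat{s}_n = 0$, i.e., unless $n = e_i$ for some $i$; at such an index $-c_{e_i} = \omega_{e_i} \log M^{*}(n_i,\eta)$. Since $\omega_n$ is by construction the indicator that the channel has not yet died by time $n$, we have $\E[\omega_{e_i}] = \Pr[T > e_i] = R_T(e_i)$, so $\E[-c_{e_i}] = R_T(e_i) \log M^{*}(n_i,\eta)$. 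Summing over $i$ yields the identity.

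The main subtlety is the feedback case, in which the controls $\hat{s}_n$ are allowed to depend on the observed channel state $\omega_n$: the boundaries $e_i$ and even the number of epochs $k$ then become random, so the per-epoch bookkeeping above must be carried out conditionally on a realized control trajectory and afterwards averaged. This step is routine because $\omega_{e_i}$ depends only on $T$ and on the realized $e_i$, so a tower expectation over the control randomness preserves the identity. Subsequently solving the DP will show that the optimal controls can in fact be chosen independently of $\omega_n$, which will underwrite the later claim that channel state feedback provides no benefit.
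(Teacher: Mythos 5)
Your proposal is correct and matches the paper's intent: the paper itself offers no written proof (it asserts the result ``follows directly'' from the preceding identification of the DP state variables with the coding scheme), and your argument supplies exactly the intended details --- mapping controls to epoch boundaries, invoking Prop.~\ref{prop:achievability} for reliability and transmission time, and using $\E[\omega_{e_i}] = R_T(e_i)$ to identify $-\E\bigl[\sum_n c_n\bigr]$ with the expected transmission volume. Your handling of the feedback case via conditioning on the control trajectory is a welcome extra level of care that the paper glosses over.
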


DP may be used to find the optimal control policy $(\hat{s}_n)$.
\begin{prop}
\label{prop:DP}
The optimal $-V$ for the initial state \eqref{eq:initialstate}, dynamics \eqref{eq:dynamics},
additive cost \eqref{eq:additivecost}, and no terminal cost
is equal to the cost of the solution produced by the dynamic programming algorithm.
\end{prop}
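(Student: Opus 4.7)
The plan is to recognize this proposition as an instance of the standard optimality theorem for finite-horizon stochastic dynamic programming with perfect state observation, and then verify that all the hypotheses of that theorem hold for our setup. Concretely, I would cite Bertsekas's result (e.g., Prop.~1.3.1 of \cite{Bertsekas2005}) which states that for a problem with Markovian dynamics, additive stage costs, independent disturbances, and a fully observed state, the DP algorithm (backward recursion of Bellman's equation) yields a value function whose value at the initial state equals the optimal expected cost, and the greedy policy with respect to these value functions is optimal.

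First I would verify the technical hypotheses. The state is the pair $(\zeta_n,\omega_n)$ taking values in a finite set (since $\zeta_n\le N$ and $\omega_n\in\{0,1\}$), the control $\hat{s}_n\in\{0,1\}$ is also finite, the horizon $N$ is finite, and the disturbances $\kappa_n\sim\Bern(R_T(n))$ are independent across $n$ because the death time $T$ induces a sequence of independent Bernoulli trials for the event $\{T>n\}$ conditional on $\{T>n-1\}$ once we condition on $\omega_{n-1}$; equivalently the joint distribution of $\omega_1,\ldots,\omega_N$ is reproduced exactly by the recursion $\omega_n=\omega_{n-1}\kappa_{n-1}$ with the specified $\kappa_n$. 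The one-stage cost \eqref{eq:additivecost} depends only on the current state and control, and there is no terminal cost.

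Next I would introduce the value functions $J_n(\zeta,\omega)$ via the backward recursion
\begin{equation*}
J_{N+1}(\zeta,\omega)=0,\qquad J_n(\zeta,\omega)=\min_{\hat{s}\in\{0,1\}}\Bigl\{c_n(\zeta,\omega,\hat{s})+\E_{\kappa_n}\bigl[J_{n+1}(\zeta',\omega')\bigr]\Bigr\},
\end{equation*}
where $(\zeta',\omega')$ is obtained from $(\zeta,\omega,\hat{s},\kappa_n)$ via \eqref{eq:dynamics}. By the standard DP optimality theorem, $J_1(0,1)$ equals the minimum of $\E\bigl[\sum_{n=1}^N c_n\bigr]$ over all admissible (possibly feedback) policies. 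Since $-V$ equals this expected additive cost for any policy satisfying \eqref{eq:dynamics} and \eqref{eq:additivecost}, the minimum of $-V$ equals $J_1(0,1)$, which is precisely the cost returned by the DP algorithm.

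The main obstacle is essentially bookkeeping rather than a substantive technical hurdle: one must check that the independence of the $\kappa_n$ disturbances correctly models the single death time $T$ (the $\kappa_n$ are not i.i.d.\ but are independent with time-varying parameters chosen so that $\Pr[\omega_n=1]=R_T(n)$), and that the state $(\zeta_n,\omega_n)$ is Markovian under any admissible policy. Once those verifications are in place, the result is immediate from the standard finite-horizon DP theorem. A secondary point worth mentioning in the proof is that although we allowed $\hat{s}_n$ to depend on the observed $\omega_n$ (channel state feedback), the DP argument handles this class of policies, which sets up the corollary that feedback offers no advantage once one observes that the optimal $J_n$ does not depend on $\omega$ in a way that requires feedback.
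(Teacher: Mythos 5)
Your proposal is correct and follows essentially the same route as the paper: the paper's proof simply observes that the initial state, dynamics, and additive cost place the problem in the form of the \emph{basic problem} of dynamic programming and then invokes \cite[Prop.~1.3.1]{Bertsekas2005}. Your additional verification of the hypotheses (finite state and control spaces, finite horizon, independent disturbances calibrated so that $\omega_n$ reproduces the survival statistics of $T$) is a welcome elaboration of the same argument rather than a different approach.
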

\begin{IEEEproof}
The system described by initial state \eqref{eq:initialstate}, dynamics \eqref{eq:dynamics},
and additive cost \eqref{eq:additivecost} is in the form of the \emph{basic problem} of
dynamic programming \cite[Sec.~1.2]{Bertsekas2005}.
Thus the result follows from \cite[Prop.~1.3.1]{Bertsekas2005}
\end{IEEEproof}

The DP optimization computations are now carried out; standard $J$ notation is
used for cost \cite{Bertsekas2005}.
The base case at time $N+1$ is 
\[
J_{N+1}(\zeta_{N+1},\omega_{N+1}) = c_{N+1} = 0 \mbox{.}
\]
In proceeding backwards from time $N$ to time $1$:
\[
J_n(\zeta_n,\omega_n) = \min_{\hat{s}_n\in\{0,1\}} \E_{\kappa_n}\left\{ c_n(\zeta_n,\omega_n,\hat{s}_n) + J_{n+1}\left(f_n(\zeta_n,\omega_n,\hat{s}_n,\kappa_n)\right) \right\} \mbox{,}
\]
for $n = 1,2,\ldots,N$, where 
\begin{align*}
f_n(\zeta_n,\omega_n,\hat{s}_n,\kappa_n) &= \begin{bmatrix}\zeta_{n+1}&\omega_{n+1}\end{bmatrix}^T \\
&= \begin{bmatrix}(\zeta_n + 1)\hat{s}_n&\omega_n \kappa_n\end{bmatrix}^T\mbox{.}
\end{align*}
Substituting our additive cost function yields:
\begin{align}
\label{eq:dropout}
J_n(\zeta_n,\omega_n) &= \min_{\hat{s}_n\in\{0,1\}} -\E_{\kappa_n}\left\{ (1-\hat{s}_n)\omega_n\log M^{*}(\zeta_n + 1,\eta)\right\} + \E_{\kappa_n}\{J_{n+1}\} \\ \notag
&= \min_{\hat{s}_n\in\{0,1\}} -(1-\hat{s}_n)R_T(n)\log M^{*}(\zeta_n + 1,\eta) + \E_{\kappa_n}\{J_{n+1}\} \mbox{.}
\end{align}
Notice that the state variable $\omega_n$ dropped out of the first term when we took the expectation
with respect to the disturbance $\kappa_n$.  This is true for each stage in the DP.
\begin{prop}
For a channel that dies $(\mathcal{X},p_a(y|x),p_d(y|x),p_T(t),\mathcal{Y})$, channel state feedback does not improve performance.
\end{prop}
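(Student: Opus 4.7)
The plan is to show that the optimal DP control $\hat{s}_n^*$ depends only on the epoch-clock state $\zeta_n$ (which the transmitter knows from its own past actions) and the time index $n$, not on the channel-state $\omega_n$. Since the only purpose channel state feedback could serve in this DP is to let the transmitter condition $\hat{s}_n$ on $\omega_n$, removing that dependence shows feedback is useless. I would establish the factorization $J_n(\zeta_n,\omega_n) = \omega_n \,\tilde{J}_n(\zeta_n)$ for some function $\tilde{J}_n$ by backward induction, and then read off the conclusion from the resulting Bellman recursion for $\tilde{J}_n$.

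For the base case, $J_{N+1} \equiv 0$, so setting $\tilde{J}_{N+1} \equiv 0$ gives the factorization trivially. For the inductive step, suppose $J_{n+1}(\zeta_{n+1},\omega_{n+1}) = \omega_{n+1}\tilde{J}_{n+1}(\zeta_{n+1})$. Using the dynamics $\omega_{n+1} = \omega_n \kappa_n$ with $\omega_n$ independent of $\kappa_n$, and $\E[\kappa_n] = R_T(n)$, the continuation cost becomes
\[
\E_{\kappa_n}\bigl[\omega_n\kappa_n \tilde{J}_{n+1}((\zeta_n+1)\hat{s}_n)\bigr] = \omega_n R_T(n)\,\tilde{J}_{n+1}((\zeta_n+1)\hat{s}_n).
\]
Substituting into the Bellman equation and pulling $\omega_n \in \{0,1\}$ outside both the stage cost and the minimization gives
\[
J_n(\zeta_n,\omega_n) = \omega_n \min_{\hat{s}_n \in \{0,1\}} \Bigl\{-(1-\hat{s}_n)\log M^{*}(\zeta_n+1,\eta) + R_T(n)\,\tilde{J}_{n+1}((\zeta_n+1)\hat{s}_n)\Bigr\},
\]
which is exactly $\omega_n\tilde{J}_n(\zeta_n)$ with $\tilde{J}_n$ defined by the bracketed expression. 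This is consistent with the observation that $\omega_n$ dropped out of \eqref{eq:dropout}, and it also confirms $J_n(\zeta_n,0) = 0$, i.e., once the channel is dead every future stage contributes zero no matter what is transmitted.

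The argument then concludes quickly: the arg-min in the recursion for $\tilde{J}_n$ depends only on $\zeta_n$ and $n$, so an optimal policy $\hat{s}_n^{*}(\zeta_n)$ exists that never consults $\omega_n$. This policy is implementable without channel state feedback and attains the same cost $J_1(0,1)$ as the feedback-aware DP. Since allowing feedback can only weakly improve the optimum, the two optima coincide, proving the proposition. The main obstacle is really just bookkeeping: carefully justifying the pull-through of $\omega_n$ through both the expectation and the minimization, and noting that when $\omega_n = 0$ the control is irrelevant so replacing any $\omega_n$-dependent choice by the $\omega_n = 1$ choice preserves optimality. Once the factorization $J_n = \omega_n \tilde{J}_n$ is in hand, everything else is immediate.
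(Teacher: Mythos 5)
Your proof is correct and follows essentially the same route as the paper: both run the backward DP recursion and observe that the channel-state variable $\omega_n$ does not affect the stage-wise minimization over $\hat{s}_n$, so the optimal policy needs only $\zeta_n$ and the time index, and channel state feedback is superfluous. Your explicit factorization $J_n(\zeta_n,\omega_n) = \omega_n\,\tilde{J}_n(\zeta_n)$, proved by backward induction with $\omega_n \in \{0,1\}$ pulled through both the expectation and the minimization, is in fact a cleaner formalization of what the paper asserts by ``repeating the expectation calculation in \eqref{eq:dropout} for each stage,'' since it keeps $\omega_n$ as a multiplicative state factor rather than replacing it by $R_T(n)$ inside the stage cost.
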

\begin{IEEEproof}
By repeating the expectation calculation in \eqref{eq:dropout} for each stage
$n$ in the stage-by-stage DP algorithm, it is verified that state variable $\omega$ does not enter into the 
stage optimization problem.  Hence the transmitter does not require channel state feedback to 
determine the optimal signaling strategy.
\end{IEEEproof}

\subsection{A Dynamic Programming Example}
\label{sec:example}
To provide some intuition on the choice of epoch lengths, we present a short example.
Consider the channel that dies with $\mathcal{X} = \{0,1\}$, $\mathcal{Y} = \{0,1,?\}$,
$p_a(y|x)$ given by \eqref{eq:pa} with $\varepsilon = 0.01$, $p_d(y|x)$ given by \eqref{eq:pd},
and $p_T(t)$ that is uniform over a finite horizon of length $40$ (disallowing death 
in the first time step):
\[
p_T(t) = \begin{cases} 1/39, & t = 2,\ldots,40, \\ 0 & \mbox{otherwise.}\end{cases}
\]
Our goal is to communicate with $\eta$-reliability, $\eta = 0.001$.

Since the death distribution has finite support, there is no benefit to transmitting
after death is guaranteed.  Suppose some sequence of $n_i$s is chosen arbitrarily:
$(n_1 = 13, n_2 = 13, n_3 = 13, n_4 = 1)$.  This has expected transmission volume (under the Strassen approximation)
\begin{align*}
V &= \sum_{i=1}^4 R_T(e_i) \log M^{*}(n_i,\eta)  \\ \notag
&\stackrel{(a)}{=} \log M^{*}(13,0.001) \sum_{i=1}^3 R_T(e_i)  \\ \notag
&= \log M^{*}(13,0.001)[R_T(13) + R_T(26)+ R_T(39)] \\ \notag
&= 4.600[9/13 + 14/39 + 1/39] = 4.954 \mbox{ bits.}
\end{align*}
where (a) removes the fourth epoch since uncoded transmission cannot achieve $\eta$-reliability.

If we run the DP algorithm to optimize the ordered integer partition, we
get the result $(n_1 = 20, n_2 = 12, n_3 = 6, n_4 = 2)$.\footnote{Equivalently 
$(n_1 = 20, n_2 = 12, n_3 = 6, n_4 = 1, n_5 = 1)$, since the last two channel usages
are wasted (see Fig.~\ref{fig:block_vol}) to hedge against channel death.}  
Notice that since the solution is in order, the greedy algorithm would also 
have succeeded.  The expected transmission volume for this strategy (under the Strassen approximation) is
\begin{align*}
V &= R_T(20)\log M^{*}(20,0.001) + R_T(32)\log M^{*}(12,0.001) + R_T(38)\log M^{*}(6,0.001) \\ \notag
&= (20/39)\cdot 9.2683 + (8/39)\cdot 3.9694 + (2/39)\cdot 0.5223 \\ \notag
&= 5.594 \mbox{ bits.}
\end{align*}

\subsection{A Precise Solution}
It has been assumed that optimal finite block length codes
are known and used.  Moreover, the Strassen approximation has been used for certain
computations.  It is, however, also of interest to determine precisely which code should be 
used over a channel that dies.  This subsection gives an example where a sequence of length-$23$ binary Golay
codes \cite{Golay1949} are optimal.  Similar examples may be developed for other perfect codes;
a perfect code is one for which there are equal-radius spheres centered at the
codewords that are disjoint and that completely fill $\mathcal{X}^{n_i}$.

Before presenting the example, the sphere-packing upper bound on $\log M^{*}(n_i,\eta)$ for a
BSC($\varepsilon$) is derived.  Recall the notion of decoding radius \cite{Blahut1983}
and let $\rho(\varepsilon,\eta)$ be the largest integer such that
\[
\sum_{s=0}^{\rho} \binom{n_i}{s} \varepsilon^{s} (1-\varepsilon)^{n_i - s} \le 1 - \eta \mbox{.}
\]
The sphere-packing bound follows from counting how many decoding regions
of radius $\rho$ could conceivably fit in the Hamming space $2^{n_i}$ disjointly.
Let $D_{s,m}$ be the number of channel output sequences that are decoded into message $w_m$
and have distance $s$ from the $m$th codeword.  By the nature of Hamming space,
\[
D_{s,m} \le \binom{n_i}{s}
\]
and due to the volume constraint,
\[
\sum_{m=1}^M \sum_{s = 0}^{\rho} D_{s,m} \le 2^{n_i} \mbox{.}
\]
Hence, the maximal codebook size $M^{*}(n_i,\eta)$ is upper-bounded as
\begin{align*}
M^{*}(n_i,\eta) &\le \frac{2^{n_i}}{\sum_{s=0}^{\rho} D_{s,m} } \\ \notag
	&\le \frac{2^{n_i}}{\sum_{s=0}^{\rho(\varepsilon,\eta)} \binom{n_i}{s} } \mbox{.}
\end{align*}
Thus the sphere-packing upper bound on $\log M^{*}(n_i,\eta)$ is
\[
\log M^{*}(n_i,\eta) \le n_i - \log\left[ \sum_{s=0}^{\rho(\varepsilon,\eta)} \binom{n_i}{s} \right] \triangleq \log M_{sp}(n_i,\eta)  \mbox{.}
\]
Perfect codes such as the binary Golay code
of length $23$ can sometimes achieve the sphere-packing bound with equality. 

Consider an $(\varepsilon,\alpha)$ BSC-geometric channel that dies, with $\varepsilon = 0.01$ and $\alpha = 0.05$.
The target error probability is fixed at $\eta = 2.9 \times 10^{-6}$.  For these values of $\varepsilon$ and $\eta$,
the decoding radius $\rho(\varepsilon,\eta) = 1$ for $2 \le n_i \le 3$.  It is $\rho(\varepsilon,\eta) = 2$ for $4 \le n_i \le 10$;
$\rho(\varepsilon,\eta) = 3$ for $11 \le n_i \le 23$; $\rho(\varepsilon,\eta) = 4$ for $24 \le n_i \le 40$; and so on.  

Moreover, one can note that the $(n = 23, M = 4096)$ binary Golay code has a decoding radius of $3$; thus it meets the 
BSC sphere-packing bound 
\[
M_{sp}(23,2.9 \times 10^{-6}) = \frac{2^{23}}{1 + 23 + 253 + 1771} = 4096
\]
with equality.  

Now to bring channel death into the picture.  If one proceeds greedily, following Algorithm~\ref{algo:greed},
but using the sphere-packing bound $\log M_{sp}(n_i,\eta)$ rather than the optimal $\log M^{*}(n_i,\eta)$, 
\begin{align*}
&n_1(\varepsilon = 0.01,\alpha = 0.05, \eta = 2.9 \times 10^{-6}) \\
&\quad = \argmax_{\nu} \bar{\alpha}^{\nu} \log_2 \frac{2^{\nu}}{\sum_{s=0}^{\rho(\varepsilon,\eta)}} = 23 \mbox{.}
\end{align*}
By the memorylessness argument of Prop.~\ref{prop:equalsize}, it follows that
running Algorithm~\ref{algo:greed} with the sphere-packing bound will yield $23 = n_1 = n_2 = \cdots$.

It remains to show that Algorithm~\ref{algo:greed} actually gives the true solution.  Had Strassen's
approximation been used rather than the sphere-packing bound, the result would follow directly from Prop.~\ref{prop:equalsize}.
Instead, the global optimality condition \eqref{eq:starK} can be verified exhaustively for 
all $23$ possible shift sizes $K$ for the first epoch:
\[
\frac{\bar{\alpha}^{23} \log M_{sp}(23,\eta) - \bar{\alpha}^{23-K}\log M_{sp}(23-K,\eta)}{\bar{\alpha}^{46}\log M_{sp}(23+K) - \bar{\alpha}^{46}\log M_{sp}(23,\eta)} \ge 1 \mbox{.}
\]
Then the same exhaustive verification is performed
for all $23$ possible shifts for the second epoch:
\begin{align*}
\frac{ \bar{\alpha}^{46} \log M_{sp}(23,\eta) - \bar{\alpha}^{46-K}\log M_{sp}(23-K,\eta) }{ \bar{\alpha}^{69}\log M_{sp}(23+K) - \bar{\alpha}^{69}\log M_{sp}(23,\eta)} &\ge 1 \\ \notag
\frac{ \bar{\alpha}^{23} \left[\bar{\alpha}^{23} \log M_{sp}(23,\eta) - \bar{\alpha}^{23-K}\log M_{sp}(23-K,\eta)\right] }{ \bar{\alpha}^{23} \left[\bar{\alpha}^{46}\log M_{sp}(23+K) - \bar{\alpha}^{46}\log M_{sp}(23,\eta)\right]} &\ge 1 \\ \notag
\frac{ \bar{\alpha}^{23} \log M_{sp}(23,\eta) - \bar{\alpha}^{23-K}\log M_{sp}(23-K,\eta) }{ \bar{\alpha}^{46}\log M_{sp}(23+K) - \bar{\alpha}^{46}\log M_{sp}(23,\eta)} &\ge 1 \mbox{.}
\end{align*}
The exhaustive verification can be carried out indefinitely to show that using the length-$23$
binary Golay code for every epoch is optimal.

\subsection{Practical Codes and Empirical Death Distributions}
It should be noted that the algorithms developed for optimizing communication schemes over channels that die
work with arbitrary death distributions, even empirically measured ones, e.g.\ the experimentally 
characterized death properties of a synthetic biology communication system \cite[Fig.~3: Reliability]{CantonLE2008}.

Further, rather than considering the $\log M^{*}(n_i, \eta)$ function for optimal finite block length 
codes, the code optimization procedures would work just as well if a collection of finite block length codes was
provided. Such a limited set of codes might be selected for decoding complexity or
other practical reasons.  As an example, consider the collection $\mathcal{C}$ of $9191$ binary minimum distance codes of lengths 
between $6$ and $16$ given in \cite[DVD supplement]{KaskiO2006}.  We run the optimization
over the example in Sec.~\ref{sec:example} but restricting to $\mathcal{C}$.

The result obtained for epoch sizes is $(n_1 = 15, n_2 = 15, n_3 = 9, n_4 = 1)$.  Under the
Strassen approximation, this set of epoch sizes gives $5.344$ bits, as compared to $5.594$ bits under
the optimal epoch sizes under the Strassen approximation.  However the Strassen approximation is not
correct and the actual number of bits achieved with the optimized epoch sizes for $\mathcal{C}$ is
$7.246$ bits.  The two minimum distance codes used are the $(n = 15, M = 256, d = 5)$ code and 
the $(n = 9, M = 6, d = 3)$ code.  It remains to be seen whether the restriction to the collection
of minimum distance codes is actually suboptimal.

\section{Partial Ordering of Channels}
\label{sec:ordering}

It is of interest to order channels that die by quality.  The partial ordering
of DMCs was studied by Shannon \cite{Shannon1958}, and as a first step, we
can slightly extend his result to order channels that die having common death distributions.

\begin{defin}
Let $p(i,j)$ be the transition probabilities for a DMC $C_1$ and let $q(k,l)$ be the 
transition probabilities for a DMC $C_2$.  Then $C_1$ is said to include $C_2$, $C_1 \supseteq C_2$,
if there exist two sets of valid transition probabilities $r_{\gamma}(k,i)$ and $t_{\gamma}(j,l)$, and
there exists a vector $g$: $g_{\gamma} \ge 0$ and $\sum_{\gamma} g_{\gamma} = 1$, such that  
\[
\sum_{\gamma,i,j} g_{\gamma} r_{\gamma}(k,i) p(i,j) t_{\gamma}(j,l) = q(k,l) \mbox{.}
\]
\end{defin}
\begin{prop}
Consider two channels that die with identical death distributions:
$(\mathcal{X}_1,p_a,p_d,p_T(t),\mathcal{Y}_1)$ and $(\mathcal{X}_2,q_a,q_d,p_T(t),\mathcal{Y}_2)$.
Let DMC $C_1$ correspond to $p_a$ and let DMC $C_2$ correspond to $q_a$ and moreover suppose
that $C_1 \supseteq C_2$.  Fix a transmission time $N$ and an expected transmission volume $V$.
Let $\eta_1$ be the best level of reliability for the first channel and $\eta_2$ be the 
best level of reliability for the second channel, under $(N,V)$.
Then $\eta_1 \le \eta_2$.
\end{prop}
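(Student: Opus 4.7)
The plan is to argue by channel simulation. Shannon's inclusion $C_1 \supseteq C_2$ says that $C_2$ can be obtained from $C_1$ by randomized pre-processing $r_\gamma$ and post-processing $t_\gamma$ with mixing distribution $g$. Because the two channels that die share the same death distribution $p_T(t)$ and the same dead-state erasure kernel, this single-letter simulation lifts to the full channel that dies, and any code designed for the second channel can be run on top of the first with identical performance; the inequality $\eta_1 \le \eta_2$ then follows at once.

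First, I would fix an optimal $(M_i,n_i)_{i=1}^k$ code $(f^{(2)}, g^{(2)})$ for the second channel that dies achieving $(N,V)$ at $\eta_2$-reliability, and endow the encoder and decoder with shared randomness $\Gamma \sim g$. Extend $t_\gamma$ to act as the identity on the erasure symbol by setting $t_\gamma(?\,|\,?) = 1$ for every $\gamma$; this extension is consistent with the dead-state kernel (both $p_d$ and $q_d$ erase every input to $?$) and with Assumption~\ref{assum:erase}, since a $?$ at the first-channel output still produces a $?$ at the input of $g^{(2)}$ and hence still triggers an erasure declaration. The new encoder for the first channel samples $\Gamma$ and, at each channel use, draws the input from $r_\Gamma(\cdot\,|\,x)$, where $x$ is the symbol that $f^{(2)}$ would have transmitted; the new decoder applies $t_\Gamma$ pointwise to each channel output before feeding the result into $g^{(2)}$.

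Second, by the inclusion identity $\sum_{\gamma,i,j} g_\gamma r_\gamma(k,i) p_a(j|i) t_\gamma(l|j) = q_a(l|k)$ together with the extension to $?$, the channel seen by $g^{(2)}$ in the simulated scheme is statistically identical to the second channel that dies. Consequently, for every message $w$ the conditional error probability on the first channel (averaged over $\Gamma$) coincides with that of the original code on the second channel, so $\lambda_{\max} \le \eta_2$; the transmission time $N$ and the expected transmission volume $V$ are unchanged. Thus $(N,V)$ is achievable at $\eta_2$-reliability on the first channel, giving $\eta_1 \le \eta_2$.

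The main obstacle is reconciling the shared randomness $\Gamma$ with the paper's deterministic code formalism under the maximum (rather than average) error probability criterion. Common randomness is standard in Shannon's ordering theorem and can either be absorbed into an enlarged code class (which only enlarges the achievable $(N,V,\eta)$ region and therefore only helps $C_1$), or eliminated by conditioning on a favorable realization of $\Gamma$ after a mild expurgation; in either case the conclusion $\eta_1 \le \eta_2$ survives.
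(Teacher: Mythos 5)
Your route is essentially the paper's: the published proof simply invokes the main theorem of Shannon's 1958 ordering paper---whose proof is exactly the randomized pre-/post-processing simulation you spell out---and then notes that the common death distribution and the erasure behavior of the dead state let the reduction be applied epoch by epoch. Your handling of the death mechanism (extending $t_\gamma$ to fix the symbol $?$, so that Assumption~\ref{assum:erase} and the expected transmission volume are preserved) is correct and is the only part of the argument that goes beyond Shannon's theorem itself.

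There is, however, one genuine technical slip. You draw a single $\Gamma \sim g$ and reuse it for every channel use. The inclusion identity $\sum_{\gamma,i,j} g_\gamma r_\gamma(k,i)\,p_a(j|i)\,t_\gamma(l|j) = q_a(l|k)$ is a single-letter statement: reusing one $\gamma$ across a block produces the mixture of product channels $\sum_\gamma g_\gamma \prod_m Q_\gamma(l_m|k_m)$, not the product of mixtures $\prod_m q_a(l_m|k_m)$, so your claim that $g^{(2)}$ ``sees a channel statistically identical to the second channel that dies'' fails as written. The fix is standard and costs one line: draw $\Gamma_1,\Gamma_2,\ldots$ i.i.d.\ from $g$, one per channel use, so that the expectation factors letter by letter and the simulation becomes exact.

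Your final paragraph correctly identifies the remaining subtlety---derandomizing under the \emph{maximal} error criterion---but neither of your proposed outs quite closes it in the paper's formalism: knowing $\lambda_w = \E_{\vec{\gamma}}[\lambda_w(\vec{\gamma})] \le \eta_2$ for each $w$ does not produce a single realization $\vec{\gamma}$ with $\max_w \lambda_w(\vec{\gamma}) \le \eta_2$; enlarging the code class to randomized codes changes what $\eta_1$ means, and expurgation shrinks the $M_i$ and hence the transmission volume $V$, which the statement holds fixed. You are no less rigorous than the paper here, since its own proof hides the same issue behind ``\emph{mutatis mutandis},'' but a fully clean argument would either admit shared randomness into the definition of a code or prove the maximum-error version of Shannon's theorem directly.
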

\begin{IEEEproof}
The main theorem of \cite{Shannon1958} proves that the average error probability
when transmitting an individual message code over $C_1$ is less than or equal to
the average error probability when transmitting the same individual message code over $C_2$.

Shannon's proof \cite{Shannon1958} holds \emph{mutatis mutandis} for maximum error probability, replacing ``average error
probability'' by ``maximum error probability.''

The desired result follows by concatenating individual message codes into a code.
\end{IEEEproof}

We can also order channels that die having common alive state transition probabilities.
\begin{defin}
Consider two random variables $T$ and $U$ with survival functions $R_T(\cdot)$ and $R_U(\cdot)$ respectively.
Then $U$ is said to stochastically dominate $T$, $U \ge_{{\rm st}} T$, if $R_T(t) \le R_U(t)$ for all $t$.
\end{defin}
\begin{prop}
Consider two channels that die with identical state properties:
$(\mathcal{X},p_a(y|x),p_d(y|x),p_T,\mathcal{Y})$ and $(\mathcal{X},p_a(y|x),p_d(y|x),q_U,\mathcal{Y})$.
Let death random variable $T$ correspond to $p_T$ and let death random variable $U$ correspond to $q_U$
and moreover suppose that $U \ge_{{\rm st}} T$.  Fix a transmission time $N$ and a level of
reliability $\eta$.  Let $V_1$ be the best expected transmission volume for the first channel and
$V_2$ be the best expected transmission volume for the second channel, under $(N,\eta)$.
Then $V_2 \ge V_1$.
\end{prop}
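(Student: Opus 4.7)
The plan is to reduce this ordering statement to the achievability formula in Prop.~\ref{prop:achievability} together with the definition of stochastic dominance. The crucial observation is that for any fixed ordered partition $(n_i)_{i=1}^k$ with $\sum n_i \le N$, the expected transmission volume produced by the scheme of Section~\ref{sec:scheme} depends on the death distribution only through the survival function evaluated at the epoch boundaries $e_i$. Hence a termwise comparison between the two channels will do all the work.

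First I would let $(n_i^{*})_{i=1}^{k^{*}}$ be the optimal ordered partition for the channel with death time $T$ at reliability $\eta$, so that
\[
V_1 = \sum_{i=1}^{k^{*}} R_T(e_i^{*}) \log M^{*}(n_i^{*},\eta),
\]
where $e_i^{*} = \sum_{j=1}^i n_j^{*}$. By the definition $U \ge_{\mathrm{st}} T$, we have $R_U(e_i^{*}) \ge R_T(e_i^{*})$ for every $i$. Since $\log M^{*}(n_i^{*},\eta) \ge 0$ (a trivial codebook of size one is always admissible), termwise comparison gives
\[
\sum_{i=1}^{k^{*}} R_U(e_i^{*}) \log M^{*}(n_i^{*},\eta) \;\ge\; \sum_{i=1}^{k^{*}} R_T(e_i^{*}) \log M^{*}(n_i^{*},\eta) \;=\; V_1.
\]

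Next I would invoke Prop.~\ref{prop:achievability} applied to the second channel with the \emph{same} partition $(n_i^{*})$: this partition is feasible (same $N$) and achieves $\eta$-reliability, so the left-hand side above is an achievable transmission volume for $(\mathcal{X},p_a,p_d,q_U,\mathcal{Y})$ at reliability $\eta$. By definition $V_2$ is the supremum of achievable volumes at $(N,\eta)$, so $V_2$ is at least the left-hand side, yielding $V_2 \ge V_1$.

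There is no real obstacle here; the only subtlety is philosophical, namely that the ``best'' $V$ is indeed the supremum of the expressions appearing in Prop.~\ref{prop:achievability} over admissible partitions. This is precisely the content of the converse discussion in Section~\ref{sec:converseargs} under Assumption~\ref{assum:erase}, which I would cite so that the reader sees that the optimization problem \eqref{eq:optim} really does characterize the optimal volume for both channels.
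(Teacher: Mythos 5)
Your argument is correct and is exactly the paper's argument, just spelled out in more detail: the paper compares the two optimization expressions \eqref{eq:optim} and notes that $R_T(t) \le R_U(t)$ for all $t$ makes the result "follow directly," which is precisely your termwise comparison at the $T$-optimal partition followed by taking the maximum for the second channel. No discrepancies.
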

\begin{IEEEproof}
Recall the expected transmission volume expression \eqref{eq:optim} for the first channel:
\[
\max_{(n_i): \sum n_i = N}  \sum_{i} R_T(e_i) \log M^{*}(n_i,\eta)
\]
and for the second channel:
\[
\max_{(\nu_i): \sum \nu_i = N}  \sum_{i} R_U(\iota_i) \log M^{*}(\nu_i,\eta) \mbox{.}
\]
Since $R_T(t) \le R_U(t)$ for all $t$, the result follows directly.
\end{IEEEproof}

These two results give individual ordering principles in the two dimensions
essentially depicted in Fig.~\ref{fig:bscgeom}.  Putting them together provides a 
partial order on all channels that die: if one channel is better than another
channel in both dimensions, than it is better overall.
\begin{prop}
Consider two channels that die: $(\mathcal{X}_1,p_a,p_d,p_T,\mathcal{Y}_1)$ and $(\mathcal{X}_2,q_a,q_d,q_U,\mathcal{Y}_2)$.
Let DMC $C_1$ correspond to $p_a$ and let DMC $C_2$ correspond to $q_a$ and moreover suppose
that $C_2 \supseteq C_1$.   Let death random variable $T$ correspond to $p_T$ and let death random variable $U$ correspond to 
$q_U$ and moreover suppose that $U \ge_{{\rm st}} T$.  Fix a transmission time $N$ and a level of
reliability $\eta$.  Let $V_1$ be the best expected transmission volume for the first channel and
$V_2$ be the best expected transmission volume for the second channel, under $(N,\eta)$.
Then $V_2 \ge V_1$.
\end{prop}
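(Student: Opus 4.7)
The plan is to prove the result by introducing an intermediate channel that shares one ``coordinate'' with each of the two channels, and then applying the two preceding propositions (in slightly adapted form) to bridge from the first channel to the second. Specifically, I would introduce the auxiliary channel that dies $(\mathcal{X}_1,p_a,p_d,q_U,\mathcal{Y}_1)$, i.e.\ the same alive-state DMC $C_1$ as the first channel but with the death distribution $q_U$ of the second channel. Denote by $V'$ its best expected transmission volume under $(N,\eta)$. The aim is to establish the chain $V_2 \ge V' \ge V_1$.

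The first inequality, $V' \ge V_1$, follows immediately from the stochastic-dominance proposition: the intermediate channel and the first channel share the same alive-state DMC, input/output alphabets, and $p_d$, and differ only in the death distribution. Since $U \ge_{\rm st} T$, that proposition applies directly and gives $V' \ge V_1$.

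For the second inequality, $V_2 \ge V'$, the intermediate channel and the second channel share the same death distribution $q_U$, but the alive-state DMCs differ with $C_2 \supseteq C_1$. I cannot quote the DMC-inclusion proposition verbatim because it was stated as a reliability comparison at fixed $(N,V)$ rather than a volume comparison at fixed $(N,\eta)$. However, the underlying mechanism of Shannon's inclusion carries over at the level of individual message codes: for any blocklength $n_i$, any $(M_i,n_i)$ individual message code/decoder achieving maximal error probability at most $\eta$ on $C_1$ can be simulated on $C_2$ (by pre- and post-processing with the mixing transition kernels $r_\gamma$ and $t_\gamma$ witnessing $C_2 \supseteq C_1$) to yield an $(M_i,n_i)$ code/decoder on $C_2$ with maximal error probability still at most $\eta$. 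Consequently $M^{*}_{C_2}(n_i,\eta) \ge M^{*}_{C_1}(n_i,\eta)$. Substituting this pointwise inequality into the expected transmission volume expression from Prop.~\ref{prop:achievability} for any ordered integer partition $(n_i)$, and then maximizing over partitions, yields $V_2 \ge V'$.

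Combining the two inequalities gives $V_2 \ge V' \ge V_1$, which is the claim. The main obstacle is the mismatch between the reliability formulation of the DMC-inclusion proposition and the volume formulation needed here; the resolution is to reapply Shannon's simulation argument at the level of $M^{*}(n_i,\eta)$, rather than invoking the earlier proposition as a black box. Everything else amounts to composing the two orderings through the intermediate channel.
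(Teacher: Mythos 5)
Your proof is correct and matches the paper's approach exactly: the paper states this proposition without an explicit proof, presenting it as the evident composition of the two preceding orderings, which is precisely your intermediate-channel argument via $(\mathcal{X}_1,p_a,p_d,q_U,\mathcal{Y}_1)$. Your observation that the DMC-inclusion step must be re-derived at the level of $M^{*}(n_i,\eta)$ (since that proposition was phrased as a reliability comparison at fixed $(N,V)$ rather than a volume comparison at fixed $(N,\eta)$) is a genuine and correct refinement of a point the paper glosses over.
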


\section{Conclusion and Future Work}
\label{sec:conc}
We have formulated the problem of communication
over channels that die and have shown how to maximize expected
transmission volume at a given level of error probability reliability.

There are several extensions to the basic formulation studied in this work
that one might consider; we list a few:
\begin{itemize}
  \item Inspired by synthetic biology \cite{CantonLE2008}, rather than thinking of
	death time as independent of the signaling scheme $X_1^n$, one might consider 
	channels that die because they lose fitness as a consequence of operation: $T$ would
	be dependent on $X_1^n$.  This would be similar to Gallager's panic button/child's toy channel,
	and would have intersymbol interference \cite{Gallager1968,Gallager1972}.  
	There would also be strong connections to channels that heat up \cite{KochLS2009} and 
	communication with a dynamic cost \cite[Ch.~3]{Eswaran2009}.

  \item In the emerging attention economy \cite{DavenportB2001}, agents faced with information overload 
	\cite{VanZandt2004} may permanently stop listening to certain communication media received over 
	noisy channels.  This setting is exactly modeled by channels that die.  The impact of communication
	over channels that die on the productivity and efficiency of human organizations may be determined
	by building on the results herein.

  \item Since channel death is indicated by the symbol $?$, the receiver unequivocally 
	knows death time.  Other channel models might not have a distinct output letter
	for death and would need to detect death, perhaps using the theory
	of estimating stopping times \cite{NiesenT2009}.

  \item Inspired by communication terminals that randomly lie within communication
	range, e.g. in vehicular communication, one might also consider a channel that is born at a 
	random time and then dies at a random time.  One would suspect that channel state feedback would be
	beneficial.  Networks of birth-death channels
	are also of interest and would have connections to percolation-style work \cite{Jacobs1959}.

  \item This work has simply considered the channel coding problem, however there are several
	formulations of end-to-end information transmission problems over channels that die, which
	are of interest in many application areas.  There is no reason to suspect a separation principle.
\end{itemize}
Randomly stepping back from infinity leads to some new understanding of the fundamental
limits of communication in the presence of noise and unreliability.

\section*{Acknowledgment}
We thank Barry Canton (Ginkgo BioWorks) and Drew Endy (Stanford University) 
for discussions on synthetic biology that initially inspired this work.  
Discussions with Baris Nakiboglu and Yury Polyanskiy (Princeton University)
are also appreciated.

\bibliographystyle{IEEEtran} 
\bibliography{abrv,conf_abrv,lrv_lib}

\end{document}